\sffamily\color{gray}\arabic*,
\quad\ctfont{[},
\setlist[description]{parsep=0.1px}
\newcommand{\field}{\mathbb{K}}
\newcommand{\affspace}[1]{{#1}^{n}}
\newcommand{\satu}[2]{(#1:#2^{\infty})}
\newcommand{\Oscar}{\href{https://oscar.computeralgebra.de/}{\sc Oscar}}
\newcommand{\msolve}{\href{https://msovle.lip6.fr/}{\tt msolve}}
\newcommand{\rad}{\operatorname{rad}}
\newcolumntype{?}{!{\vrule width 1pt}}
\newtheorem{definition}{Definition}
\numberwithin{definition}{section}
\newtheorem{theorem}[definition]{Theorem}
\newtheorem{corollary}[definition]{Corollary}
\newtheorem{proposition}[definition]{Proposition}
\newtheorem{lemma}[definition]{Lemma}
\theoremstyle{remark}
\newtheorem{remark}{Remark}[section]
\newtheorem{example}{Example}[section]
\title{A \emph{Direttissimo} Algorithm for Equidimensional Decomposition}
\author{Christian Eder \thanks{RPTU Kaiserslautern-Landau, Germany}
  \and Pierre Lairez \thanks{Inria, Uni. Paris-Saclay, Palaiseau, France}
  \and Rafael Mohr \footnotemark[1] \footnotemark[3]
  \and Mohab Safey El Din \thanks{Sorbonne Uni., CNRS, Paris, France}
  }
\begin{document}

\maketitle

\begin{abstract}
  We describe a recursive algorithm that decomposes an algebraic set
  into locally closed equidimensional sets, i.e. sets which each have
  irreducible components of the same dimension. At the core of this
  algorithm, we combine ideas from the theory of triangular sets,
  a.k.a. regular chains, with Gröbner bases to encode and work with
  locally closed algebraic sets. Equipped with this, our algorithm
  avoids projections of the algebraic sets that are decomposed and
  certain genericity assumptions frequently made when decomposing
  polynomial systems, such as assumptions about Noether position. This
  makes it produce fine decompositions on more structured systems
  where ensuring genericity assumptions often destroys the structure
  of the system at hand.  Practical experiments demonstrate its
  efficiency compared to state-of-the-art implementations.
\end{abstract}
\maketitle

\section{Introduction}
\label{sec:intro}

\paragraph{Problem statement}

Let $\field$ be an algebraically closed field, let $R = \field[x_1,\dotsc,x_n]$ be a polynomial
ring and let~$f_1,\dotsc,f_c \in R$ be a polynomial system generating an
ideal~$I\subseteq R$.  The zero set~$X$ of the
polynomials~$f_1,\dotsc,f_c$ in $\field$, decomposes uniquely as a union of irreducible
algebraic sets such that none of them contains
another.  These are the \emph{irreducible components} of~$X$ and
correspond to the \emph{minimal associated primes} of~$I$.  The
variety~$X$ is \emph{equidimensional} if all its irreducible
components have the same dimension.
It is clear that~$X$ always admits a
decomposition~$X = Y_1 \cup \dotsb \cup Y_c$ where the~$Y_i$ are
equidimensional algebraic sets.

Suppose $X = \bigcup_{i=1}^s X_i$ is the decomposition of $X$ into
irreducible components. For $k=0,\dots,n$ define
$Y_k:= \bigcup_{i\text{ s.t. }\dim X_i=k}X_i$, then each $Y_k$ is
equidimensional so that $X = \bigcup_{k=0}^n Y_k$ is an
\textit{equidimensional decomposition} of $X$. Given
$f_1,\dots,f_c \in R$, we aim at computing such an equidimensional
decomposition of $X$.

It will be clear that our algorithms will only have to do arithmetic
over the subfield of $\field$ that the coefficients of $f_1,\dots,f_c$
lie in, the output will also be defined over the same subfield. In the
following we will work only over $\field$ for the purpose of simplicity
of presentation.



This problem finds natural applications in singularity analysis of
sensor-based controllers of mechanism design \parencite[e.g.][and
references therein]{pascualescudero:hal-03070525,
  garciafontan:hal-03499974}, in algorithms of real algebraic geometry
\parencite[e.g.][]{AUBRY2002543, el2004properness} and real algebra
\parencite{SYZ18, SAFEYELDIN2021259} as well as automated theorem
proving and geometry \parencite[e.g.][]{wu2007mathematics,
  yang2001complete, yang1998automated, chen2013application}.

\paragraph{Prior Works}
The importance of this computational problem fostered a vast body of
literature often also as an intermediate step towards primary
decomposition of ideals or prime decomposition of
varieties. Algorithms for equidimensional decomposition of algebraic
sets can be classified along the data structures which they employ to
represent (equidimensional) algebraic sets.

There are two prominent strategies for equidimensional decomposition
using Gröbner bases frequently implemented in computer algebra
systems. The first one uses algebraic elimination techniques. It
combines the knowledge of the dimension of the ideal generated by the
input polynomials with the elimination theorem
\parencite[Theorem~3.1.2]{CoLiOS07} to compute a description of the
projection of the algebraic set under study on a well-suited affine
linear subspace to deduce how to split the corresponding ideal
\parencite{gianni1988, krick1991, caboara1997, decker1999,
  KALKBRENER1994365}.  The projection of the equidimensional component
of highest dimension (frequently called the \textit{equidimensional
  hull}) of the algebraic set under question will then be cut out by a
hypersurface whose defining polynomial has degree equal to the degree
of this equidimensional hull. As a consequence, such algorithms
have the disadvantage that they need to manipulate polynomials of
degree in the order of the Bezout bound of the input system. To
circumvent this drawback, another set of methods, called \emph{direct methods} has been introduced by
\textcite{eisenbud1992}.  They rely on homological algebra to reduce the
problem of equidimensional decomposition to the computation of
syzygies which are then used to split the polynomial ideal under
study, while avoiding projections.
These algorithms often provide an intermediate step towards
primary decomposition of ideals. For this problem, modular techniques
and dedicated algorithms for the case where $\field$ is a finite field
have been designed \parencite{NORO20041227, yokoyama2002prime,
  ishihara2022modular}.


Another body of work uses {\em lazy} representations of algebraic
sets. Frequently, the core idea is to exploit the fact that any
equidimensional algebraic set is locally almost everywhere a complete
intersection, i.e. an equidimensional algebraic set of codimension $c$
can be represented by the vanishing of $c$ polynomials on a dense
Zariski open subset of itself.  Hence, equidimensional algebraic sets
can be understood as the Zariski closures of locally closed sets
defined by polynomial equations and inequations. Taking this
perspective, one additionally enforces these $c$ polynomial equations
to form a {\em triangular set}. These have their origin in the Wu-Ritt
characteristic sets \parencite{ritt1950a, wu1986, chou1990, wang1993,
  Gallo1991}.  Triangularity is therein understood with respect to the
variables of the underlying polynomial ring (i.e. in a sense analogous
to the notion of triangular matrices in linear algebra). This
triangular structure naturally also yields the equations of the
algebraic set where the $c$ polynomials fail to define the algebraic
set at hand and thus triangular sets have a description of the
previously mentioned Zariski open subset attached to them in a natural
way. Because of their triangular structure they allow the reduction of
certain algorithmic challenges to a univariate problem. Of particular
importance, especially in the realm of equdimensional decomposition,
are certain special triangular sets called \textit{regular chains},
introduced by \textcite{kalkbrener1993, lu1994}. A regular chain
models an \textit{unmixed} dimensional ideal and has good algorithmic
properties with respect to the ideal it represents. A related
frequently implemented algorithm was also given by
\textcite{lazard1991}. Algorithms using regular chains are prominently
part of the computer algebra system Maple \parencite{chen2007a,
  chen2012}. We refer to \textcite{wang2001, hubert2003} for
introductions to the subject and to \textcite{aubry1999} for a
theoretical account as to how certain different notions of triangular
sets relate to each other.

It should be noted that, as for methods based on Gr\"obner bases
combined with algebraic elimination, these triangular encodings make
use of polynomials whose degrees, in the worst case, can be as high as
is the degree of the equidimensional components they do
encode. Nonetheless, algorithms based on triangular representations
can be quite well behaved compared to Gröbner basis techniques
especially on certain sparse polynomial systems.

Another data structure naturally encoding equidimensional algebraic
sets is that of a \textit{geometric resolution} developed by
\textcite{lecerf2000, lecerf2003}.  A geometric resolution is a
certain zero-dimensional parametrization of an algebraic set in
Noether position. In our setting, these zero-dimensional
parametrizations are used to encode generic points in the
equidimensional components of the algebraic set under study (the
numerical counterpart of this encoding is known as the notion of
witness sets \parencite{sommese2005}, a notion that will be utilized
in this paper as well). Under certain generically satisfied
assumptions on the input, these can be combined with \textit{straight
  line programs} to obtain the best known complexity bounds for
equidimensional decomposition. See also \parencite{jeronimo2002} for a
related approach.

To bypass the ``projection-degree'' problem, incremental
ap\-proa\-ches have been investigated in combination with Gr\"obner
bases algorithms.  Incremental means here that they feed the
decomposition algorithm with one input polynomial after another, in
the same way as \textcite{lazard1991} or \textcite{lecerf2000}, for
example, to identify when some polynomial is a zero divisor in the
ring of polynomials quotiented by the ideal generated by the previous
polynomials.  \Textcite{moroz2008} combines Gr\"obner bases
computations with representations of equidimensional algebraic sets by
means of locally closed sets. In a previous work, we also investigated
this approach by exploiting properties of signature-based Gr\"obner
bases algorithms to enhance the detection and exploitation of zero
divisors and compute the so-called nondegenerate locus of a polynomial
system \parencite{eder2022}.

\paragraph{This Work}
In this work, we again take the incremental approach previously mentioned.
As in the other incremental algorithms, the foundation of our
algorithm is a decomposition algorithm to, given an equidimensional
algebraic set $X$ and some $f\in R$, determine the \textit{locus of
  proper intersection} of $f$ on $X$, i.e. the set of points
$p\in k^n$ such that $X\cap V(f)$ has dimension one less than $X$. This is
then used to iterate over the input equations $f_1,\dots,f_r$. More
precisely, one starts by decomposing $V(f_1,f_2)$ then uses the output
to decompose $V(f_1,f_2,f_3)$ and so on.

In contrast to a lot of other algorithms for equidimensional
decomposition based on Gröbner bases we borrow from the theory of
triangular sets and work with locally closed sets instead of
polynomial ideals simlarly to \textcite{moroz2008}. In the iterative
approach outlined above this turns out to have two benefits.

First, it naturally removes from the output sets of our iterative
algorithm certain embedded components that appear during the
decomposition. To illustrate this consider the following example:
\begin{example}
  Let $R:=\mathbb{Q}[x,y,z]$, $X:=V(xy), f:=xz$. To decompose $X\cap V(f)$ into
  equidimensional components one may start by decomposing $X = V(x) \cup V(y)$.
  Then one intersects these two components with $V(f)$ to obtain the
  equidimensional decomposition $X\cap V(f) = V(x) \cup V(y,xz)$. The latter set
  has the irreducible component $V(y,x)$ which is embedded in $V(x)$. If one
  instead splits into a {\em disjoint} union $X = V(x) \cup \left[ V(y)\setminus
    V(x) \right]$ and again intersects both components with $V(f)$ one obtains
  $X = V(x) \cup \left(V(y,z)\setminus V(x) \right)$ and the latter component no
  longer has the irreducible component $V(y,x)$.
\end{example}

Second, an iterative equidimensional decomposition algorithm may
produce redundant components, which, if they are not deduplicated, may
yield an exponential blow-up in the number of components: if one has
decomposed $X = \bigcup_i X_i$ with the $X_i$ sharing a large number of
irreducible components then decomposing each $X_i\cap V(f)$ to obtain a
decomposition of $X\cap V(f)$ results in an even more redundant
decomposition.  Because we use locally closed sets to model our
equidimensional sets we are enabled to enforce that every time we
decompose a locally closed set the resulting output sets be pairwise
set-theoretically disjoint. Our experiments indicate that this seems
to enforce a sufficiently strong irredundancy between our components
to avoid an exponential blow-up in the number of components.

In this paper we provide two methods to work with the locally closed
sets appearing in our algorithm: One method models them ``naively'' in
the sense that we encode them by storing their defining equations and
inequations and use Gröbner bases of their associated ideals to work
with them algorithmically. The other method tries to avoid having to
know a Gröbner basis for the ideal associated to a locally closed set
as much as possible by storing instead a Gröbner basis for a
\emph{witness set} of the locally closed set in question. Using
Gröbner bases here with the graded reverse lexicographical ordering
has the effect that, compared to algorithms using triangular sets, we
are able to
\begin{itemize}
\item avoid computing projections of the algebraic sets to be
  decomposed and certain frequently made genericity assumptions such
  as ideals being in Noether position;
\item obtain desciptions of these sets with lower degree polynomials.
\end{itemize}

Borrowing further from the theory of triangular sets we also adopt the
heuristic that it is a good idea to decompose given algebraic sets as
often and as finely as possible when working with them. This
philosophy is baked into the recursive structure of our algorithms
which exists so as to decompose a given locally closed set as much as
possible given generating sets for certain saturation ideals.

We implemented our algorithm in the computer algebra system \Oscar{}
\parencite{Oscar2023} using its interface to the library \msolve{}
\parencite{berthomieu2021} for all necessary Gröbner basis
computations. Experimental results indicate that our algorithm is able
to tackle polynomial systems which are out of reach of state-of-the
art implementations of algorithms for equidimensional decomposition
which are available in leading computer algebra systems.



\section{Algorithms}
\label{sec:algop}

\subsection{Principles}

To illustrate the basic principles behind our equidimensional
decomposition algorithm, consider an equidimensional variety~$X$ in
the affine space~$\affspace{\field}$.  Let~$f \in R$.  The
variety~$X$ is partitioned into:
\begin{enumerate}
\item Points~$p$ where~$f$ is a non zero divisor locally at~$p$ (that
  is in the ring~$R_p/I(X) R_p$). The polynomial $f$ takes nonzero
  values in any open neighborhood of~$p$ in~$X$.\\
  This defines an open
  subset~$X_\text{proper}$ of~$X$.
\item Points~$p$ contained in an irreducible component of~$X$ on which~$f$
  vanishes identically. \\
  This defines a closed subset~$X_\text{improper}$ of~$X$.
\end{enumerate}
It is clear that~$X = X_\text{proper} \sqcup X_\text{improper}$
(where~$\sqcup$ denotes a disjoint union) and
that~$X_\text{improper} \subseteq V(f)$, so that
\[ X\cap V(f) = \left( X_\text{proper} \cap V(f) \right) \sqcup
  X_\text{improper}. \] By construction, the
$ X_\text{proper} \cap V(f)$ is a \emph{proper} intersection: it is
equidimensional of dimension~$\dim X -1$, or empty. As a union of
irreducible components of~$X$, the closed set $X_\text{improper}$ is
equidimensional, with the same dimension as the one of~$X$, unless it
is empty. So we obtain an equidimensional decomposition
of~$X\cap V(f)$. Given defining equations for $X$, this process can be
applied iteratively to obtain an equidimensional decomposition of any
affine algebraic variety.

In our algorithm we apply the above idea without directly computing
$X_{{text}{proper}}$ and $X_{\text{improper}}$. Let $I(X)\subset R$ be an
ideal such that $V(I(X)) = X$. Further, we denote by $\satu{I(X)}{f}$
the saturation ideal of $I(X)$ by $f$. Recall that $V(\satu{I(X)}{f})$
is the Zariski closure of $X \setminus V(f)$
\parencite[Theorem~4.4.10]{CoLiOS07}.

We look for an element~$g \in \satu{I(X)}{f} \setminus I(X)$. If there is none,
this implies that~$X_\text{improper} = \varnothing$ so~$X \cap V(f)$ is
equidimensional. If there is such a $g$, then we consider the
following partition of~$X$:
\begin{enumerate}
\item the closed locus~$X_1$ of points~$p$ where~$g$ has nonzero
  values in any neighborhood of~$p$ in~$X$;
\item the open locus~$X_2$ of points~$p$ where~$g$ is zero in some
  neighborhood of~$p$ in~$X$.
\end{enumerate}
These two sets are equidimensional.  By construction, $fg$ vanishes
identically on~$X$, so~$X_1 \subseteq X_\text{improper}$ and this gives the
following decomposition of~$X$:
\begin{equation}
  X = X_1 \sqcup \left( X_2\cap V(f) \right).\label{eq:2}
\end{equation}
The ideal of~$X_1$ is given by~$\satu{I(X)}{g}$.  The term
$X_2\cap V(f)$ may not be equidimensional but we may apply the above idea
recursively: We again split $X_2$ along an element in
$\satu{I(X_2)}{f}\setminus I(X_2)$ if it exists.  This leads to Algorithm
\emph{split}.

The set~$X_2$ is not closed, this raises the need to deal not only
with closed sets of the affine space, but more generally locally
closed sets.  We do so by partitioning them into special locally
closed sets, more precisely into closed sets in the complement of a
hypersurface in the affine space, which we call \emph{affine cells}.
Concretely, suppose that $I(X_1) = \satu{I(X)}{g}$ is given by a
finite generating set $H\sqcup \{h\}\subset R$. We then recursively decompose
$X_2 = X\setminus V(H\sqcup \{h\})$ via
\[ X_2 = X \setminus V(H \cup \left\{ h \right\}) = ( X\setminus V(h)) \sqcup \big( (X\setminus V(H))
  \cap V(h) \big). \] The intersection with~$V(h)$ is computed with
\emph{split} to ensure equidimensionality.  Algorithm~\emph{remove}
below performs these operations.  Findally we obtain an
equidimensional decomposition algorithm following an incremental
strategy by repeated application of \emph{split}, see
Algorithm~\emph{equidim}.

The primitive operations we use to manipulate affine cells are
presented next, while the proof of correctness and termination of the
algorithms are in Section~\ref{sec:corterm}.

\subsection{Primitives}
\label{sec:primitives}

\begin{definition}
  \label{def:hypcomp}
  An \emph{affine cell}~$X$ is a locally closed set
  of~$\field^n$ of the form~$Z\setminus V(g)$ where~$Z$ is an
  algebraic set and~$g\in R$. An affine cell~$X$ is
  \emph{equidimensional} if all the irreducible components of the
  Zariski closure~$\overline X$ have the same dimension.
\end{definition}

Regardless of the mode of representation of an affine cells, we
assume that we can perform the following operations on any affine
cell~$X$:
\begin{enumerate}[(1)]
  \item\label{it:inter} Given~$f\in R$, compute the affine cell~$X\cap V(f)$;
  \item\label{it:complement} Given~$f\in R$, compute the affine cell~$X \setminus V(f)$;
\end{enumerate}

As often in effective algebraic geometry,
algebraic sets are defined by ideals that are not always radical
so our affine cells come with a distinguished ideal~$I(X)\subseteq R$
such that~$\overline X = V(I(X))$.
The radical of~$I(X)$ is denoted~$\rad I(X)$.
We assume that operations~\ref{it:inter} and~\ref{it:complement}
satisfy $I(X) + \langle f\rangle \subseteq I( X\cap V(f) )$ and~$I(X) \subseteq I(X \setminus V(f))$.
We assume further that we can perform the following operations on any affine cell~$X$:
\begin{enumerate}[(1), resume]
  \item Given~$f\in R$, decide if~$f\in I(X)$;
  \item Compute a basis of $I(X)$, denoted~\fn{basis}(X).
\end{enumerate}

For example, we may represent an affine cell~$X$ by a pair~$(F, g)$,
where~$F$ is a Gröbner basis of~$I(X)$, for some monomial ordering,
and~$g$ a polynomial such that~$X = \overline{X}\setminus V(g)$
\parencite[see][for an introduction to Gröbner bases]{becker1993}.  We
denote~$X = V(F; g)$.  For a set $F\subseteq R$ and an element
$g\in R$, let~$\fn{sat}(F, g)$ denote a Gröbner basis of the saturation
ideal~$\satu{\langle F \rangle}{g}$.  Recall that
\[ \satu{I}{g} \stackrel{\text{def}}= \left\{ f\in R \ \middle|\  \exists k\in \mathbb{N}, fg^k \in I \right\}.\]
Using these primitive \fn{sat}, we can perform all the four operations above:
\begin{enumerate}[(1)]
  \item $V(F; g) \cap V(f) = V( \fn{sat}(F \cup \left\{ f \right\}, g); g)$;
  \item $V(F; g) \setminus V(f) = V( \fn{sat}(F, f); fg)$;
  \item $f \in I(X)$ if and only if the normal form of~$f$ w.r.t.~$F$ is zero;
  \item $\fn{basis}(V(F; g)) = F$.
\end{enumerate}

\begin{remark}
  In Section \ref{sec:implem} we explain how to perform the above
  primitive operations on an affine cell $X$ using a notion called
  \textit{witness sets}, introduced for the purpose of equidimensional
  decomposition by \textcite{lecerf2003} under the name lifting
  fibers.  This leads to a lazier representation of $X$, one where a
  Gröbner basis for $I(X)$ is not always required.
\end{remark}

\begin{algorithm}[H]
  \caption{Equidimensional decompositions}
  \label{alg:split}
  \raggedright

  \begin{description}
    \item[Input] An equidimensional affine cell~$X$, an element~$f\in R$
    \item[Output] A partition of~$X\cap V(f)$ into equidimensional affine cells
  \end{description}

  \begin{pseudo}
    function \fn{split}(X, f)\\+
    $G \gets \fn{basis}( X\setminus V(f))$\\
    if $G \subseteq I(X)$ \label{line:testproper} \ct{can be replaced by~$G\subseteq \rad I(X)$}\\+
    return $\left\{ X\cap V(f) \right\}$ \label{line:proper}\\-
    else\\+
    $g\gets $ \tn{any element of~$G \setminus I(X)$} \label{line:pickg}\\
    $H \gets \fn{basis}(X \setminus V(g))$\\
    $\mathcal{D} \gets\left\{ X\cap V(H) \right\}$ \label{line:improper}\\
    for $Y \in \fn{remove}(X \cap V(g), H)$ \label{line:addg}\\+
    $\mathcal{D} \gets \mathcal{D} \cup \fn{split}(Y, f)$\label{line:reccall}\\-
    end\\
    return $\mathcal{D}$\\-
    end\\-
    end
  \end{pseudo}

  \bigskip
  \begin{description}[labelsep=1em]
    \item[Input] An  affine cell~$X$, a finite set~$H \subset R$
    \item[Precondition] $X\setminus V(H)$ is equidimensional
    \item[Output] A partition of~$X\setminus V(H)$ into equidimensional affine cells
  \end{description}

  \begin{pseudo}
    function \fn{remove}(X, H)\\+
    if $H = \varnothing$\\+
    return $\varnothing$\\-\
    else\\+
    $h\gets $ \tn{any element of~$H$}\\
    $\mathcal{D} \gets \left\{ X \setminus V(h) \right\}$\\
    for $Y \in \fn{remove}(X, H\setminus \left\{ h \right\})$\\+
    $\mathcal{D} \gets \mathcal{D} \cup \fn{split}(Y, h)$\\-
    end\\
    return $\mathcal{D}$\\-
    end\\-
    end
  \end{pseudo}

  \bigskip
  \begin{description}[labelsep=1em]
    \item[Input] a finite set $F \subseteq R$
    \item[Output] A partition of~$V(F)$ into equidimensional affine cells
  \end{description}
  \begin{pseudo}
    function \fn{equidim}(F)\\+
    $\mathcal{D} \gets \left\{ V(\varnothing; 1) \right\}$ \ct{the full affine space}\\
    for $f$ in $F$\\+
    $\mathcal{D} \gets \bigcup_{X\in \mathcal{D}} \fn{split}(X, f)$\\-
    end\\
    return $\mathcal{D}$\\-
    end
  \end{pseudo}
\end{algorithm}

\begin{example}
  \label{exp:split}
  To illustrate Algorithm \emph{split} we spell out how it behaves on
  the input $X := V(xy,zw)$ and $f:=xz$. Using the notation of
  Algorithm \emph{split} we find $G = \{y,w\}$. This is not contained
  in $I(X)$, so we may choose $g := y$ in line 6 of
  Algorithm \emph{split}. Then we find $H = \{x,zw\}$. Note that
  $X\setminus V(zw) = \emptyset$ and so Algorithm \emph{split} returns
  \begin{align*}
    &X\cap V(H) \text{ and }\text{\emph{split}}(\text{\emph{remove}}(X,H),f)\\
    =\; &V(zw,x) \text{ and } \text{\emph{split}}(V(y,zw)\setminus V(x), xz).
  \end{align*}
  This second call to Algorithm \emph{split} finds $G = \{y,w\}$,
  again this set is not contained in $I(V(y,zw)\setminus V(x))$, and so we can choose
  $g := w$ in line 6. Then we find $H = \{z\}$ which this time yields
  \begin{align*}
    \text{\emph{split}}(V(y,zw)\setminus V(x), xz) = &V(y,z)\setminus V(x) \\
    &\text{and } \text{\emph{split}}(V(y,w)\setminus V(xz),xz)
  \end{align*}
  The last call to split simply finds the empty set and so all
  in all we have obtained the decomposition
  \begin{align*}
    V(xy,zw,xz) = V(x,zw) \cup V(y,z)\setminus V(x).
  \end{align*}
\end{example}

\begin{remark}
  Example \ref{exp:split} illustrates the fact that Algorithm
  \emph{split} may split an algebraic set even if it is
  equidimensional. Heuristically, the finer the intermediate
  decomposition in Algorithm \emph{equidim} is, the computationally
  easier subsequent steps will be.
\end{remark}

\subsection{Correctness and Termination}
\label{sec:corterm}

When computing an interection of an equidimensional affine cell~$X$
with a hypersurface~$V(f)$, we distinguish two cases, depending on whether $V(f)$ intersects~$X$ properly or not.
Lemma~\ref{lem:regint} deals with the first case, while Lemma~\ref{lem:irregular-step} deals with the second.





\begin{lemma}
  \label{lem:regint}
  Let~$X$ be an equidimensional affine cell.
  Let~$f \in R$, such that $\satu{I(X)}{f} \subseteq \rad I(X)$.
  Then~$X \cap V(f)$ is empty or equidimensional with dimension~$\dim X - 1$.
\end{lemma}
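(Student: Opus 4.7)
The plan is to translate the algebraic hypothesis into a geometric statement about the irreducible components of $\overline{X}$, apply Krull's Hauptidealsatz to each such component, and then pass back to the locally closed setting.

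First I would observe that the inclusion $I(X) \subseteq \satu{I(X)}{f}$ always holds, so the hypothesis $\satu{I(X)}{f} \subseteq \rad I(X)$ is equivalent to $V(\satu{I(X)}{f}) = V(I(X)) = \overline{X}$. Using the already cited fact that $V(\satu{I(X)}{f})$ is the Zariski closure of $\overline{X}\setminus V(f)$, this amounts to saying that $\overline{X}\setminus V(f)$ is dense in $\overline{X}$. Equivalently, $f$ does not vanish identically on any irreducible component of $\overline{X}$.

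Next I would write $\overline{X} = X_1 \cup \dotsb \cup X_r$ as a decomposition into irreducible components, all of dimension $d = \dim X$ by equidimensionality. Since $f$ does not vanish on any $X_i$, its class in the coordinate ring of $X_i$ is a nonzero element; by Krull's Hauptidealsatz, every irreducible component of $X_i \cap V(f)$ has dimension $d-1$ (and the intersection may be empty). Taking the union over $i$, we conclude that $\overline{X}\cap V(f)$ is either empty or equidimensional of dimension $d-1$.

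Finally I would descend from $\overline{X}\cap V(f)$ to the affine cell $X\cap V(f)$. Writing $X = \overline{X} \setminus V(g)$ for the distinguished polynomial $g$ attached to the cell, we have $X \cap V(f) = (\overline{X} \cap V(f)) \setminus V(g)$. Its Zariski closure is the union of those irreducible components of $\overline{X}\cap V(f)$ which are not entirely contained in $V(g)$; each such component has dimension $d-1$ by the previous step. Hence $X \cap V(f)$ is either empty (if every component of $\overline{X}\cap V(f)$ lies in $V(g)$) or equidimensional of dimension $\dim X - 1$, which is the claim. The only point requiring care is this last descent, making sure that removing the hypersurface $V(g)$ cannot alter the dimension of the surviving components; this is immediate because irreducibility and dimension are intrinsic to the closed components, not to the ambient cell.
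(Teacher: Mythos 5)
Your proof is correct and arrives at the claim by a route that is parallel to, but not identical to, the paper's. The common core: both translate $\satu{I(X)}{f}\subseteq\rad I(X)$ into the geometric statement that $f$ vanishes identically on no irreducible component of $\overline X$, and both invoke Krull's Hauptidealsatz for the upper bound on codimension. Where you diverge is in how the complementary lower bound is obtained. You decompose $\overline X$ into its irreducible components $X_1,\dots,X_r$ of common dimension $d$, observe that $f$ is a nonzero element of each affine domain $\field[X_i]$, and combine the Hauptidealsatz with the dimension formula for finitely generated domains over a field; the paper instead works globally with $I(X)$, first reducing to the case that $I(X)$ is radical, then noting that $f$ is a nonzerodivisor in $R/I(X)$, producing a regular sequence of length $\operatorname{codim}I(X)+1$ inside $I(X)+\langle f\rangle$, and finally invoking the Cohen--Macaulay property of $R$ to bound the codimension of every minimal prime from below. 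In effect your argument uses catenarity of affine algebras where the paper uses Cohen--Macaulayness of $R$; both are legitimate and give the same bound. You also make explicit a step the paper leaves tacit: its Krull/Cohen--Macaulay argument really establishes equidimensionality of $V(I(X)+\langle f\rangle)=\overline X\cap V(f)$, and one must still pass to the locally closed cell $X\cap V(f)=(\overline X\cap V(f))\setminus V(g)$. You handle this correctly by observing that the closure of this cell is a union of irreducible components of $\overline X\cap V(f)$, so deleting $V(g)$ either kills a component outright or leaves its dimension unchanged.
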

\begin{proof}
  Let~$I = I(X)$.  We may assume that~$I$ is radical: the
  assumption $\satu{I}{f} \subseteq \rad I$ also
  implies~$\satu{\rad(I)}{f} \subseteq \rad I$.
  If~$a \in \satu{\rad(I)}{f}$, then~$a f^r \in \rad I$, for
  some~$r \geq 0$, and so~$(a f^r)^s \in I$, for some~$s \geq 0$. In
  particular, $a^s \in \satu{I}{f} \subseteq \rad I$.
  So~$a\in \rad I$.  Suppose that $X \cap V(f)$ is not empty. By Krull's
  principal ideal theorem any minimal prime over
  $I + \langle f \rangle$ has codimension at most
  $\operatorname{codim}I + 1$.  The
  condition~$\satu{I}{f} \subseteq \rad I$ means geometrically
  that~$X\subseteq \overline{X\setminus V(f)}$, so that $f$ has nonzero values in the
  neighborhood of any point in~$X$.  So~$f$ is a not a zero divisor
  in~$R/I$.  In particular, there is a regular sequence of length
  $\operatorname{codim}I + 1$ in $I + \langle f \rangle$.  Since the polynomial
  ring $R$ is Cohen-Macaulay it follows that every minimal prime over
  $I + \langle f \rangle$ has at least codimension $\operatorname{codim} I + 1$.
\end{proof}

\begin{lemma}\label{lem:irregular-step}
  Let~$X$ be an equidimensional affine cell.
  Let~$f \in R$, let~$g \in \satu{I(X)}{f}$
  and let~$I_g = \satu{I(X)}{g}$.
  Let~$X_1 = X\cap V(I_g)$ and~$X_2 = (X \cap V(g)) \setminus V(I_g)$.
  Then:
  \begin{enumerate}[(i)]
    \item \label{it:part0} $X = X_1\sqcup X_2$;
    \item \label{it:partition} $X \cap V(f) = X_1 \sqcup \left( X_2 \cap V(f) \right)$ ;
    \item \label{it:equidim1} $X_1$ is empty or equidimensional with~$\dim X_1 = \dim X$;
    \item \label{it:equidim2} $X_2$ is empty or equidimensional with~$\dim X_2 = \dim X$;
  \end{enumerate}
\end{lemma}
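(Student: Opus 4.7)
The plan is to analyse $X_1$ and $X_2$ through the irreducible decomposition of the equidimensional set $\overline{X} = V(I(X))$: all its components have the same dimension $d = \dim X$. I will call a component \emph{nice} if $g$ does not vanish identically on it, and \emph{bad} otherwise, so that $V(I_g) = \overline{V(I(X)) \setminus V(g)}$ is precisely the union of the nice components. Since $X$ is an affine cell, I can write $X = \overline{X} \setminus V(h)$; because $\overline{X}$ is by definition the Zariski closure of $X$, no irreducible component of $\overline{X}$ is contained in $V(h)$. This is the recurring ingredient that protects dimensions under the cell structure.

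For (i), it suffices to check that $X_2 = X \setminus V(I_g)$, after which the partition is immediate. So I must show $X \setminus V(I_g) \subseteq V(g)$: if $p \in X$ lies outside $V(I_g)$, pick $u \in I_g$ with $u(p) \neq 0$; then $u g^k \in I(X)$ for some $k$ vanishes at $p \in \overline{X}$, forcing $g(p) = 0$.

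For (ii), the only content is $X_1 \subseteq V(f)$. Since $g \in \satu{I(X)}{f}$, one has $g f^k \in I(X)$ for some $k$, so at any point of $X$ where $g$ is nonzero, $f$ vanishes. Hence $f$ vanishes on $X \setminus V(g)$, which is dense in $\overline{X} \setminus V(g)$ (because $X$ is dense in $\overline{X}$ and $V(g)$ is closed), so $f$ vanishes on $V(I_g)$ and in particular on $X_1$.

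For (iii) and (iv), the analysis proceeds through the two families of components. For $X_1$, its closure is contained in $V(I_g)$, the union of the nice components; intersecting with $\overline{X} \setminus V(h)$ removes from each nice component only a proper closed subset (by the initial remark on $h$), so $X_1$ is equidimensional of dimension $d$ unless it is empty. For $X_2 = X \setminus V(I_g)$, a point of $X$ lies outside $V(I_g)$ iff it belongs to no nice component, i.e., to bad components only. I would show $\overline{X_2}$ equals the union of the bad components: for any bad component $Y_j$, both $V(I_g) \cap Y_j$ and $V(h) \cap Y_j$ are proper closed subsets of the irreducible set $Y_j$ (the first because $Y_j$ is not contained in any other irreducible component of $\overline{X}$, the second by the initial remark on $h$), so their union is proper closed in $Y_j$ and $Y_j \setminus (V(I_g) \cup V(h))$ is a dense open subset of $Y_j$. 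Hence $\overline{X_2}$ is equidimensional of dimension $d$ unless it is empty. The main obstacle will be precisely this bookkeeping: keeping straight which closed sets can and cannot contain an irreducible component of $\overline{X}$, so that the dimensions of $X_1$ and $X_2$ do not drop unexpectedly when one cuts by $V(h)$ or by $V(I_g)$.
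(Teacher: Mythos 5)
Your proof is correct and takes essentially the same approach as the paper's: establishing $X \setminus V(I_g) \subseteq V(g)$ for (i), $X_1 \subseteq V(f)$ for (ii), and analyzing the irreducible components of $\overline X$ (your nice/bad dichotomy is exactly the set $V(I_g)$ versus its complementary components) for (iii)--(iv). You spell out some cell-structure bookkeeping that the paper leaves implicit when it asserts that $X_1$ is a union of components of $X$ and that $X_2$ is open in $X$ --- namely that no component of $\overline X$ lies inside $V(h)$, so removing $V(h)$ keeps each relevant component dense --- and in (ii) you argue via density and continuity where the paper uses the ideal inclusion $f \in \operatorname{rad}\bigl(I(X):g\bigr) \subseteq \operatorname{rad}\,\satu{I(X)}{g}$, but the underlying structure is identical.
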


\begin{proof}
  Obviously $X = X_1 \sqcup (X \setminus V(I_g))$.
  As a set, $X_1$ is the union of the components of~$X$ on which~$g$ is not identically zero.
  In particular~$X \setminus V(I_g)$ is the set of points of~$X$ in a neighborhood of which~$g$ is identically zero.
  Therefore~$X \setminus V(I_g) \subseteq V(g)$, so we obtain
  \[ X \setminus V(I_g) = (X \cap V(g)) \setminus V(I_g), \]
  which gives~\ref{it:part0}.

  Next, we have~$I(X_1) = I(X) + I_g = \satu{I(X)}{g}$.
  Moreover~$f \in \operatorname{rad} I(X_1)$.
  Indeed, $g f^k \in I(X)$ for some~$k \geq 0$, by definition of~$g$,
  and therefore~$f \in \operatorname{rad} \left( I(X) : g \right) \subseteq \operatorname{rad}\satu{I(X)}{g}$.
  So~$X_1 \subseteq V(f)$.
  It follows that~$X \cap V(f) = X_1 \sqcup \left( X_2 \cap V(f) \right)$.
  This proves~\ref{it:partition}.

  Since~$X$ is equimensional, it follows that~$X_1$ (as a union of components of~$X$) is also equidimensional
  of same dimension, unless it is empty.
  This proves~\ref{it:equidim1}.
  As for~$X_2$, it is open in~$X$, so it inherits the equidimensionality and the dimension of~$X$, unless it is empty.
  This proves~\ref{it:equidim2}.
\end{proof}

We now prove correctness and termination of Algorithms \emph{split} and \emph{remove}
with a mutual induction.
On line~\ref{line:testproper}, the test~$G\subseteq I(X)$ can be replaced by~$G\subseteq \rad I(X)$,
or any condition which holds when~$G\subseteq I(X)$ and doee not hold when~$G\not\subseteq \rad I(X)$,
this does not affect correctness or termination. We will use this variant in Section~\ref{sec:implem}.

\begin{theorem}\label{lem:term-correc-technical}
  For any affine cell~$X$:
  \begin{enumerate}[(i)]
    \item \label{it:split} If~$X$ is equidimensional, then for any~$f\in R$, the procedure~\emph{split} terminates on input~$X$ and~$f$
    and outputs a partition of\/~$X\cap V(f)$ into equidimensional affine cells~$Y$ with~$I(X) \subseteq I(Y)$.
    \item \label{it:remove} For any finite set~$H\subset R$ such that~$X \setminus V(H)$ is equidimensional,
    the procedure~\emph{remove} terminates on input~$X$ and~$H$ and outputs a partition of\/~$X\cap V(H)$ into equidimensional affine cells~$Y$ with~$I(X) \subseteq I(Y)$;
  \end{enumerate}
\end{theorem}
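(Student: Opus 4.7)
The plan is to prove (i) and (ii) simultaneously by a well-founded induction. The natural primary measure is the ordinal-valued Noetherian rank of $I(X)$ (well-defined since $R$ is Noetherian), but this alone does not suffice: \emph{remove} recurses with the same $X$, and a call \emph{split}$(Y,h)$ emitted from within \emph{remove} may carry $I(Y)=I(X)$. I therefore package the induction as a lexicographic induction on the triple $\mu := (\operatorname{rank}(I(X)),\, t,\, |H|)$ with $t=0$ for \emph{split} and $t=1$ for \emph{remove} (and $|H|=0$ for \emph{split}). In \emph{split}'s else-branch both the call \emph{remove}$(X\cap V(g),H)$ and the subsequent calls \emph{split}$(Y,f)$ involve ideals strictly larger than $I(X)$, since $g \in I(X\cap V(g)) \setminus I(X)$. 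In \emph{remove} the self-call strictly decreases $|H|$ with $I(X)$ fixed, and the transition to \emph{split}$(Y,h)$ drops $t$ from $1$ to $0$ even when $I(Y)=I(X)$. So every recursive call strictly decreases~$\mu$.

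For (i), when the test on line~\ref{line:testproper} succeeds we have $\satu{I(X)}{f} \subseteq \rad I(X)$, and Lemma~\ref{lem:regint} gives that $X\cap V(f)$ is empty or equidimensional of dimension $\dim X - 1$, with $I(X) \subseteq I(X\cap V(f))$ from the intersection primitive. Otherwise I pick $g\in G\setminus I(X)$; since $G$ generates $\satu{I(X)}{f}$ we have $g \in \satu{I(X)}{f}$, and $\langle H \rangle = \satu{I(X)}{g}$. Lemma~\ref{lem:irregular-step} then furnishes the partition $X\cap V(f) = X_1 \sqcup (X_2 \cap V(f))$ with $X_1 = X \cap V(H)$ and $X_2 = (X\cap V(g))\setminus V(H)$, both equidimensional of dimension $\dim X$. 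The inductive hypothesis applied to \emph{remove}$(X\cap V(g),H)$, whose precondition is precisely the equidimensionality of $X_2$, yields an equidimensional partition of $X_2$; for each output cell $Y$ we have $I(Y) \supseteq I(X\cap V(g)) \supsetneq I(X)$, so by induction \emph{split}$(Y,f)$ returns an equidimensional partition of $Y\cap V(f)$, and combining with $\{X_1\}$ gives the required partition of $X\cap V(f)$.

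For (ii), the base case $H=\varnothing$ returns the empty partition of $X\setminus V(\varnothing)=\varnothing$ (using $V(\varnothing)=\field^n$, the convention already underlying \emph{equidim}). For $H\neq\varnothing$ I pick $h\in H$. Since $V(h)\supseteq V(H)$, both $X\setminus V(h)$ and $X\setminus V(H\setminus\{h\})$ are open subsets of the equidimensional set $X\setminus V(H)$ and hence equidimensional (when nonempty); in particular the precondition for \emph{remove}$(X, H\setminus\{h\})$ is satisfied, and by induction this call returns an equidimensional partition of $X\setminus V(H\setminus\{h\})$ into cells $Y$ with $I(X)\subseteq I(Y)$. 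For each such $Y$ the inductive hypothesis applies to \emph{split}$(Y,h)$ (relying on the $t$-drop in $\mu$ when $I(Y)=I(X)$) and yields an equidimensional partition of $Y\cap V(h)$. The set-theoretic identity
\[
X\setminus V(H) \;=\; (X\setminus V(h)) \;\sqcup\; \bigl((X\setminus V(H\setminus\{h\}))\cap V(h)\bigr)
\]
shows that the assembled collection is a partition of $X\setminus V(H)$, and the ideal inclusions $I(X)\subseteq I(Y)$ propagate from the axioms on the primitives.

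The main obstacle is precisely the design of this well-founded measure: the mutual recursion is genuinely awkward because \emph{remove} need not strictly increase $I(X)$ and may hand \emph{split} a cell whose ideal equals its own input. Once this is resolved by the $(t,|H|)$ tie-breakers, the remaining work reduces to a routine assembly of Lemma~\ref{lem:regint}, Lemma~\ref{lem:irregular-step}, and the partition identity above.
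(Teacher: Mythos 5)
Your proof is correct and follows essentially the same strategy as the paper: a well-founded induction over the ideal lattice, with Lemmas~\ref{lem:regint} and~\ref{lem:irregular-step} supplying the case analysis, and the disjoint-union identity~\eqref{eq:1} handling \emph{remove}. The one genuine refinement is your explicit lexicographic measure $(\operatorname{rank}(I(X)), t, |H|)$: the paper's proof performs Noetherian induction on $I(X)$ and, within a fixed Noetherian level, proves (i) before (ii) and then appeals to (i) inside the proof of (ii) for cells $Y$ with $I(Y)=I(X)$. That ordering is valid but left implicit there; your $t$-flag (and the inner $|H|$ counter) encodes it in the measure itself, which makes the well-foundedness of the mutual recursion self-contained and easier to audit. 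You also correctly note the base-case convention $V(\varnothing)=\field^n$, which the paper does not spell out. Everything else — the identification of $G$ as a basis of $\satu{I(X)}{f}$, the observation $g\in I(X\cap V(g))\setminus I(X)$ giving the strict ideal increase, and the equidimensionality of the open subsets $X\setminus V(h)$ and $X\setminus V(H\setminus\{h\})$ — matches the paper's argument step for step.
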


\begin{proof}
  We proceed by Noetherian induction on~$I(X)$ and assume the statement holds for any affine cell $X'$ with~$I(X) \subsetneq  I(X')$.

  We begin with \fn{split}.
  Let~$f \in R$ and let~$I_f = \satu{I(X)}{f}$.
  If~$I_f \subseteq I(X)$,
  then Lemma~\ref{lem:regint} applies and~$X\cap V(f)$ is equidimensional.
  So $\fn{split}(X, f)$ terminates and is correct in this case.

  Assume now that there is some~$g\in I_f \setminus I(X)$.
  Let~$I_g= \satu{I}{g}$.
  Lemma~\ref{lem:irregular-step} applies: an equidimensional decomposition of~$X\cap V(f)$
  is given by~$X\cap V(I_g)$ and an equidimensional decomposition of~$\left( (X\cap V(g))\setminus V(I_g) \right)\cap V(f)$.
  Moreover~$(X\cap V(g)) \setminus V(I_g)$ is equidimensional.
  Since~$g\not\in I(X)$, we have~$I(X) \subsetneq I(X\cap V(g))$
  so \fn{remove}(X\cap V(g), H)  (using the notations of Algorithm \emph{split}, where~$H$ is a generating set of~$I_g$)
  is correct and terminates, by induction hypothesis. Moreover, it outputs affine cells~$Y$ such that~$I(X) \subsetneq I(X \cap V(g)) \subseteq I(Y)$.
  So the recursive calls \fn{split}(Y, f) are correct and terminate.

  As for \fn{remove},
  let~$H \subset R$ finite such that~$X \setminus V(H)$ is equidimensional.
  If~$H = \varnothing$, then~\ref{it:remove} holds trivially.
  As for the case~$H\neq \varnothing$, let~$ h\in H$ and~$H' = H\setminus h$.
  Since~$V(H) = V(h) \cap V(H')$, we have
  \begin{equation}
    X \setminus V(H) = (X \setminus V(h)) \sqcup \left( (X \setminus V(H')) \cap V(h) \right).\label{eq:1}
  \end{equation}
  The set $X\setminus V(h)$ and $X\setminus V(H')$ are open in~$X\setminus V(H)$ so equidimensional (or empty).
  By induction on the cardinal of~$H$, we assume that \fn{remove}(X, H') is a partition of~$X \setminus V(H')$ into equidimensional affine cells,
  and that every cell $Y$ of this partition satisfies~$I(X) \subseteq I(Y)$.
  By~\ref{it:split}, the calls
  \fn{split}( Y, h)  terminates and yield a partition of
  $(X \setminus V(H')) \cap V(h)$ into cells~$Y$ with~$I(X) \subseteq I(Y)$.
  Moreover the affine cell~$Y = X\setminus V(h)$ also satisfies~$I(X) \subseteq I(Y)$.
  By~\eqref{eq:1}, \fn{remove}(X, H) terminates too and is a partition of~$X \setminus V(H)$
  into cells~$Y$ with~$I(X) \subset I(Y)$.
\end{proof}

\begin{corollary}
  Algorithm \emph{equidim} is correct and terminates.
\end{corollary}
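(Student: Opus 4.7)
The plan is to prove correctness and termination by induction on the number of polynomials processed in the outer loop of \fn{equidim}, maintaining the invariant that after processing the first $k$ polynomials $f_1, \dots, f_k$ of $F$ (in the order they are traversed by the loop), the set $\mathcal{D}$ is a finite partition of $V(f_1, \dots, f_k)$ into equidimensional affine cells.

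For the base case, before the loop begins $\mathcal{D} = \{V(\varnothing; 1)\}$, which represents the full affine space $\affspace{\field}$. This is irreducible, hence equidimensional of dimension $n$, and is trivially a one-element partition of $V(\varnothing) = \affspace{\field}$. For the inductive step, each cell $X \in \mathcal{D}$ is equidimensional by the hypothesis, so I would invoke $\fn{split}(X, f_k)$, and Theorem~\ref{lem:term-correc-technical}\ref{it:split} guarantees that the call terminates and produces a finite partition of $X \cap V(f_k)$ into equidimensional affine cells. Taking the disjoint union of these partitions over the finitely many $X \in \mathcal{D}$ yields a finite partition of $\bigsqcup_{X \in \mathcal{D}} (X \cap V(f_k)) = V(f_1, \dots, f_{k-1}) \cap V(f_k) = V(f_1, \dots, f_k)$, preserving the invariant.

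Termination then follows immediately: the outer loop ranges over the finite set $F$, and each of its iterations performs finitely many calls to \fn{split}, each of which terminates by Theorem~\ref{lem:term-correc-technical}. There is no real obstacle here; the corollary is essentially a direct packaging of Theorem~\ref{lem:term-correc-technical} into an invariant preserved by the outer loop, once one observes that the initial cell $V(\varnothing; 1)$ is the whole affine space (so equidimensional) and that a disjoint union of partitions of disjoint pieces is itself a partition of their union.
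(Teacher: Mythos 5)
Your proof is correct and is essentially the argument the paper leaves implicit: the corollary is stated without a proof, and the intended justification is exactly the loop-invariant induction you describe, using Theorem~\ref{lem:term-correc-technical}\ref{it:split} at each step and observing that the initial cell is the (equidimensional) full affine space.
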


\section{Implementation and experimental results}
\label{sec:implem}

\subsection{Implementation Details}
\label{sec:algdet}

In this section we give some implementation details and alternatives.
In particular, we show a lazier data structure for affine cells
which is able to delay some Gröbner basis computations
at the cost of a Monte Carlo randomization.
We have implemented both the method described
in Section \ref{sec:algop} and the method described in this section.

For either method, we will need an algorithm that, given generators
for an ideal $I$ and an element $f\in R$, computes generators for the
saturation $\satu{I}{p}$. Even for our lazy representation, this will
still sometimes be needed to compute a Gröbner basis for the ideal
$I(X)$, where $X$ is an affine cell. In the probabilistic setting,
some saturations will be replaced by saturations of zero dimensional
ideals.

In our implementation we chose the standard method of performing
saturations using Gröbner bases.
To compute generators for
$\satu{I}{p}$, fix a monomial order $\leq$ on $R[t]$ for a new variable
$t$ such that $\leq$ eliminates $t$. Compute a Gröbner basis $G$ for the
ideal $I + \langle tp - 1\rangle \subset R[t]$ w.r.t $\leq$. Then the elements in
$G$ that do not contain the variable $t$ give a Gröbner basis of
$\satu{I}{p}$ by the elimination theorem.
Other saturation methods also exist such as the methods
presented in \textcite{eder2022} or \textcite{berthomieu2022}.

Randomization relies on intersecting with random linear subspaces of
appropriate dimension to reduce to the zero-dimensional case. This
idea is well known in symbolic computation \parencite{lecerf2003} and
numerical algebraic geometry \parencite[e.g.]{bates2013} wherein these
intersections of algebraic sets with random suitable random linear
subspaces are known under the name \emph{witness sets}.

\begin{proposition}\label{prop:proba}
  Let $X\subseteq \affspace{\field}$ be  an equidimensional affine cell of dimension
  $d$ and let $f\in R$.
  Then, for a
  generic linear subspace $L\subset \affspace{\field}$ of codimension $d$ the
  following statements hold:
  \begin{enumerate}
  \item $f\in \rad{I(X)}$ if and only if $f \in \rad{I(X\cap L)}$.
  \item $I(X\setminus V(f)) \subseteq \rad{I(X)}$ if and only if $X \cap L \cap V(f) = \varnothing$.
  \end{enumerate}
\end{proposition}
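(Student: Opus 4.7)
The plan is to reduce both parts to the classical intersection-theoretic fact that, for a generic linear subspace $L \subseteq \affspace{\field}$ of codimension $d$, for any fixed irreducible variety $Y$ of dimension $d$ and any fixed closed subset $W$ of dimension less than $d$, the set $Y \cap L$ is nonempty and zero-dimensional while $W \cap L = \varnothing$. I would first set up notation: write $X = \overline{X} \setminus V(g)$ and decompose $\overline{X} = Y_1 \cup \dotsb \cup Y_s$ into irreducible components, all of dimension $d$ and none contained in $V(g)$; so $\rad I(X) = \bigcap_i I(Y_i)$ and each $V(g) \cap Y_i$ has dimension at most $d-1$. Choosing $L$ generically so that (a) each $Y_i \cap L$ is nonempty and zero-dimensional and (b) $L$ avoids $V(g) \cap Y_i$ for every $i$, one obtains $X \cap L = \overline{X} \cap L = \bigsqcup_i (Y_i \cap L)$. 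This is the common setup for both statements.

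For (1), the forward implication is immediate since $X \cap L \subseteq X$, so if $f$ vanishes on $X$ it vanishes on $X \cap L$. For the converse I argue by contrapositive: if $f \not\in \rad I(X)$ then some component $Y_i$ is not contained in $V(f)$, so $V(f) \cap Y_i$ has dimension at most $d-1$; imposing additionally that $L$ misses this set (still generic), one produces a point of $Y_i \cap L \subseteq X \cap L$ at which $f$ does not vanish, so $f \not\in \rad I(X \cap L)$.

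For (2), I first translate the ideal-theoretic hypothesis: since $I(X \setminus V(f)) = \satu{I(X)}{f}$, the condition $I(X \setminus V(f)) \subseteq \rad I(X)$ amounts to $f$ being a non zero divisor modulo $\rad I(X)$, i.e., no component $Y_i$ is contained in $V(f)$. If this holds, each $V(f) \cap Y_i$ has dimension $< d$ and generic $L$ misses all of them, so $X \cap L \cap V(f) = \varnothing$. Conversely, if some $Y_i \subseteq V(f)$, then $Y_i \cap L$ is nonempty for generic $L$ and sits inside both $X \cap L$ (by the setup above) and $V(f)$, witnessing $X \cap L \cap V(f) \neq \varnothing$.

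The main obstacle is purely bookkeeping the genericity: each of the finitely many conditions I impose on $L$ (nonemptiness and zero-dimensionality of each $Y_i \cap L$, disjointness from $V(g) \cap Y_i$, and when relevant disjointness from $V(f) \cap Y_i$) is known to cut out a nonempty Zariski open subset of the Grassmannian of codimension-$d$ linear subspaces, so their finite intersection is still nonempty and Zariski open, which is the precise meaning of ``generic $L$'' intended here.
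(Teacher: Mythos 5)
Your proposal is correct and follows essentially the same route as the paper's proof: both reduce each statement to the standard fact that a generic codimension-$d$ linear subspace meets every $d$-dimensional (locally closed, irreducible) piece in a nonempty finite set and avoids every set of dimension $<d$. Your version is just a bit more explicit, carrying out the argument component-by-component and carefully tracking the inequation $g$, whereas the paper argues directly with the open set $U=\{p\in X: f(p)\neq 0\}$ in part (1) and invokes the "proper intersection" formulation in part (2).
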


\begin{proof}
  We always have~$\rad I(X) \subseteq \rad I(X \cap L)$.
  Conversely,
  assume that~$f\not\in \rad I(X)$.
  Let~$U = \left\{ p\in X\ \middle|\  f(p) \neq 0 \right\}$/
  It is an open subset of~$X$ and it is non empty, by hypothesis.
  Since~$X$ is equidimensional, $U$ has dimension~$d$
  and the intersection~$U\cap L$ is nonempty (because~$L$ is generic).
  Therefore~$f$ is nonzero on a nonempty subset of~$X\cap L$.
  In particular, $f \not \in \rad I(X\cap L)$.
  This proves the first point.

  For the second point note that $I(X\setminus V(f)) \subseteq \rad{I(X)}$
  if and only if~$X$ and~$V(f)$ intersect properly,
  that is~$X\cap V(f)$ is equidimensional of dimension~$d-1$.
  The intersection of~$X\cap V(f)$ with the codimension~$d$ generic space~$L$ is empty if and only if the dimension of~$X\cap V(f)$ is less than~$d$.
  The proves the second point.
\end{proof}

In this setting, we represent an equidimensional affine cell~$X$ by a
triple~$(F, G, W, d)$, where~$F$, $G$ and~$W$ are subsets of~$R$
and~$d$ is an integer such that~$\dim X = d$,
$X = V(F) \setminus V(\prod_{g\in G} G)$ and~$W$ (stands for \emph{witness set}) is
a Gröbner basis of~$I(X \cap L)$ for some generic linear subspace
space~$L$ of~$\field^n$ of codimension~$d$.  We
denote~$X = V(F; G, W, d)$.  In practice, $L$ will only be random and
sufficient genericity will only hold with high probability (assuming
that~$\field$ has enough elements).  Given only $F$, $G$ and~$d$, we
can compute a suitable set~$W$ by choosing a set~$J\subseteq R$ of~$d$ random
linear forms and computing a Gröbner basis
of~$(((F : g_1^\infty) : \dotsb): g_r^\infty)$,
where~$G = \left\{ g_1,\dotsc,g_r \right\}$.  This procedure is
denoted \fn{witness}(F, G, d).

The four primitive operations are perfomed as follows.
For the intersection operation, we need some additional knowledge on the expected dimension of the output.
Let~$X = V(F; G, W, d)$ be an equidimensional cell.
\begin{enumerate}[(1)]
  \item \emph{[Proper intersection]} Given $f\in R$ such that~$X$ intersects~$V(f)$ properly,
  \[ X \cap V(f) = V\big(F' ; G, \fn{witness}(F', G, d-1), d-1\big), \]
  with~$F' = F \cup \left\{ f \right\}$;
  \item[(1')] \emph{[Purely improper intersection]} Given~$H\subset R$ such that~$X \cap V(H)$ is a union of components of~$X$,
  \[ X \cap V(H) = V\big(F \cup H ; G, \fn{gb}(W \cup H), d \big), \]
  where \fn{gb}(W\cup H) denotes a Gröbner basis of the ideal generated by~$X\cup H$;
  \item for~$f\in R$, $X\setminus V(f) = V(F; G\cup \left\{ f \right\}, \fn{sat}(H, f), d)$;
  \item $f\in \rad I(X)$ if and only if~$1 \in (W : f^\infty)$;
  \item $I(X)$ is computed by saturating~$\langle F\rangle$ successively by all the elements of~$G$.
\end{enumerate}

In the decomposition algorithm, we always know \emph{a priori}
the kind of each intersection.
The intersection on line~\ref{line:proper} of \fn{split} is proper,
the intersection on line~\ref{line:improper} is purely improper.
The one on line~\ref{line:addg} is more subtle. Indeed, the decomposition algorithm may produce here a nonequidimensional cell
when considering~$X\cap V(g)$.
With the notations of this algorithm, the cell~$X' = X\cap V(g)$ is only equidimensional outside of~$V(H)$ (of dimension~$\dim X$).
This nonequidimensional cell will go through only one operation among the four primitives: $X'\setminus V(h)$ for some~$h\in H$.
This operation restores equidimensionality.
So we can mostly ignore this issue and compute the intersection $X\cap V(g)$ as a purely improper intersection, pretending that~$X\cap V(g)$ is equidimensional.

In addition we obtain
a fifth operation: a probabilistic algorithm to
check if $X\cap V(f)$ is empty or equidimensional of dimension one less
than $X$ (or, equivalently $\satu{I(X)}{f} \subseteq \rad I(X)$).
This is given by Algorithm \emph{isProper}.
Equipped with this algorithm, we can replace the if-condition in line~\ref{line:testproper} of Algorithm \emph{split} with
$\fn{isProper}(X,f)$. Only if this is not satisfied we proceed to
compute a Gröbner basis for $I(X\setminus V(f))$.

\begin{algorithm}[]
  \caption{Proper intersection check}
  \label{alg:isreg}
  \raggedright



  \begin{description}[labelsep=1em]
    \item[Input] An equidimensional affine cell~$X$, an element~$f\in R$
    \item[Output] \kw{true} if $X\cap V(f)$ is a proper intersection, \kw{false} otherwise
  \end{description}

  \begin{pseudo}
    function \fn{isProper}(X, f)\\+
    $W \gets $ \tn{the withness set of~$X$}\\
    $W' \gets$ \tn{a Gröbner basis of $\satu{\langle W\rangle}{f}$}\\
    return $1 \in W'$
  \end{pseudo}
\end{algorithm}

Lastly we want to note the following: In Algorithm \emph{split}, on
input $X$ and $f$, we may have to compute
$G := \fn{basis}(X\setminus V(f))$ but we use only one element in $G$ in line~\ref{line:pickg}
of Algorithm \emph{split}. This situation can be improved by a
simple caching mechanism: Note that in line~\ref{line:reccall} of Algorithm
\emph{split} we call $\fn{split}(Y,f)$ with affine cells $Y$
satisfying $Y \subset X$. This certainly means
$G = \fn{basis}(X\setminus V(f)) \subseteq \fn{basis}(Y \setminus V(f))$.  Hence we may first
try to pick an element from the already computed set $G$ in 6 of the
call $\fn{split}(Y,f)$ before computing $\fn{basis}(Y\setminus V(f))$.

\subsubsection{Rationale for the new Data Structure}
\label{sec:rat}

Always knowing a Gröbner basis for the affine cells appearing in
Algorithm \emph{equidim} puts a large penalty on the cost of our
algorithms.

This is actually related to well-known observations on the complexity
of Gröbner bases under some regularity assumptions. Indeed, for a
regular sequence in strong Noether position, the cost of linear
algebra steps needed to compute intermediate Gröbner bases in an
incremental manner is higher than the final steps
\parencite{BARDET201549}. Dimension dependent complexity bounds
provide another confirmation of
this behaviour \parencite{10.1145/3087604.3087624}.

Using witness sets we can potentially avoid a lot of intermediate
Gröbner basis computations in our algorithms. In our experience, for a
large number of cases, using witness sets greatly improves the efficiency
of our algorithm which is theoretically backed up by the previously mentioned complexity results.

Furthermore, in the data structure for affine cells presented in the
last subsection we store the definining inequation of our affine cells
as a factorization. If one wants to saturate a polynomial ideal $I$ by
an element $f\in R$ which is known to have a factorization
$f = \prod_{g\in G}g$ given by a finite set $G$ then it is expected to be
cheaper to saturate by the elements $g\in G$ one-by-one instead of
saturating by $f$ directly using the above elimination method. This
lowers the degrees of the polynomials involved.

\subsubsection{A Better Version of \emph{remove}}
\label{sec:rem}

Furthermore, we encountered the following problem when implementing
Algorithm \emph{remove} as presented in Section
\ref{sec:algop}. In this algorithm~$H$ is
a Gröbner basis so it tends to be very redundant (that is very far from being a minimal set of generators).
So it often happens that there are
two or more elements $h_1,h_2\in H$ such that for
$X_1 := X\setminus V(h_1)$ and $X_2:=X\setminus V(h_2)$ we have
$I(X_1) = I(X_2)$. Eventually the sets $X_1$ and $X_2$ become
disjoint, since eventually $X_1$ is intersected with $V(h_1)$ or $X_2$
is intersected with $V(h_1)$, but Algorithm \emph{split} may have to
split $X_1$ and $X_2$ before that happens.  Since splitting an affine
cell with Algorithm \emph{split} depends only on the underlying ideals
we may then repeat the exact same operations on the level of ideals
twice or more. This issue then compounds exponentially due to the
recursive nature of our algorithms. We therefore modified Algorithm
\emph{remove} to obtain disjoint equidimensional affine cells from
$X_1$ and $X_2$ as fast as possible, resulting in
Algorithm~\ref{alg:nznew}. Note that when we use witness sets this
algorithm avoids knowing Gröbner bases for the ideals $I(X_i)$ until
potentially line 12.

\begin{algorithm}[]
  \caption{remove'}
  \label{alg:nznew}
  \raggedright
  \begin{description}[labelsep=1em]
    \item[Input] An affine cell $X$, a finite set $H\subset R$
    \item[Output] A partition of $X\setminus V(H)$ into equidimensional affine cells
  \end{description}
  \begin{pseudo}
    function \fn{remove'}(X, H)\\+
    $\mathcal{D} \gets \emptyset$\\
    for $i$ from $1$ to $r$\\+
    $X_i \gets X\setminus V(h_i)$\\
    $H_i\gets \emptyset$\\
    for $j$ from $1$ to $i-1$\\+
    if \fn{isProper}(X_i, h_j)\\+
     $X_i\gets X_i\cap V(h_j)$\\-
    else\\+
    $H_i\gets H_i\cup \{h_j\}$\\-
    end \\-
    end \\
    $\mathcal{D}_i\gets $ \tn{decomposition of $X_i\cap V(H_i)$}\\ \hspace{2cm}\tn{by repeated application of \emph{split}}\\
    $\mathcal{D} \gets \mathcal{D} \cup \mathcal{D}_i$\\-
    end\\
    return $\mathcal{D}$\\-
    end
  \end{pseudo}
\end{algorithm}

\subsection{Experimental Results}
\label{sec:experi}

In this section we give some experimental results. We compare the
timings of our implementation of Algorithm \emph{equidim} and methods
for equidimensional decomposition of algebraic sets available in
various computer algebra systems in a table given below. Some of the
timings are discussed in more detail in the next section. We compared
to the following implementations:
\begin{enumerate}
\item The function \texttt{Triangularize} from the
  \texttt{RegularChains} library in Maple
  \parencite{lemaire2005},
  which decomposes a polynomial system into regular chains,
\item The function \texttt{equidimensional\_decomposition\_weak} in
  \Oscar{} \parencite{Oscar2023} which is a wrapper around a
  corresponding \texttt{Singular} function \parencite{singular}.
\item The Magma \parencite{bosma1997} functions\\
  \texttt{EquidimensionalDecomposition} (corresponding to the column
  ``Magma'' in the table) and\\
  \texttt{ProbablePrimeDecomposition} (corresponding to the column
  ``Magma (prime dec.)'' in the table) and
\item The numerical polynomial systems solver Bertini \parencite{bates2013}
  which we ran on each system at hand by requesting a witness set
  decomposition into irreducible components with fixed precision set
  to Bertini's default value.
\end{enumerate}

The implementation of our algorithms is itself done in \Oscar{} which
is written in the programming language Julia
\parencite{bezanson2017}. Its source code is available at
\begin{center}
  https://github.com/RafaelDavidMohr/Decomp.jl
\end{center}
For all necessary Gröbner basis computations
we employ the library \msolve{} \parencite{berthomieu2021} for which
\Oscar{} offers an interface.  Our suite of example systems is
comprised as follows:

\begin{enumerate}
\item Cyclic$(8)$, coming from the classical Cyclic$(n)$ benchmark.
\item The systems P4L 1 to 3 come from the perspective-four line problem
  in robotics, see \textcite{garciafontan:hal-03499974}.
\item The systems C1 to C3 are certain jacobian ideals of single
  multivariate polynomials which define singular hypersurfaces.
\item Ps$(n)$, encoding pseudo-singularities via polynomials
  $$f_1,\dots,f_{n-1},g_1,\dots,g_{n-1}$$
  with $f_i\in \mathbb{K}[x_1, \ldots, x_{n-2}, z_1, z_2]$,
  $g_i\in \mathbb{K}[y_1, \ldots, y_{n-2}, z_1, z_2]$, the $f_i$ being chosen
  as a random dense quadrics, and $g_i$ chosen such that
  $g_i(x_1,\dots,x_{n-2},z_1,z_2) = f$, i.e. as a copy of $f_i$ in the
  variables $y_1,\dots,y_{n-2},z_1,z_2$.
\item sos$(s, n)$, encoding the critical points of the restriction of
  the projection on the first coordinate to a hypersurface which is a
  sum of $s$ random dense quadrics in $\mathbb{K}[x_1, \ldots, x_n]$.
  $$f, \frac{\partial f}{\partial x_2}, \ldots, \frac{\partial f}{\partial x_n},
  \quad f = \sum_{i=1}^s g_i^2.$$
\item sing$(n)$, encoding the critical points of the restriction of
  the projection on the first coordinate to a (generically singular)
  hypersurface which is defined by the resultant of two random dense
  quadrics $A, B$ in $\mathbb{K}[x_1, \ldots, x_{n+1}]$:
  $$f, \frac{\partial f}{\partial x_2}, \ldots, \frac{\partial f}{\partial x_n},
  \quad f = \textrm{resultant}(A, B, x_{n+1}).$$
\item The Steiner polynomial system, coming from \textcite{breiding2020}.
\item All remaining examples are part of the BPAS library
  \parencite{asadi2021}. The BPAS library offers an alternative to the
  \texttt{RegularChains} library in Maple with special emphasis on
  paralellism and it will be interesting to compare it to our
  algorithm in the future.
\end{enumerate}

To obtain the timings in the table below we almost exclusively used
the witness set based data structure for affine cells. Every
polynomial system was computed with in characteristic 65521 with the
exception of Bertini which, as a numerical piece of software, computes
over the complex numbers. Due to this difference a comparison between
Bertini's and our timings needs to be considered carefully. We tried
to indicate this in the table below by coloring the Bertini column in
grey. All computations except for Magma were done on an single core of
an Intel Xeon Gold 6244 CPU @ 3.60GHz. All Magma computations were
done on a single core of an Intel Xeon E5-2690 @ 2.90GHz. We let every
algorithm run for at least an hour or 50 times the time it took for
the fastest algorithm to complete the system in question, whichever
was bigger.

Using the witness sets of our output we also did the following to
compare to Bertini: We ran our algorithm in a large random prime
charateristic. We then removed the embedded irreducible components
from each of our output components and computed the degrees of the
output components. This gives us the degree in each dimension of the
algebraic set defined by the input. Whenever Bertini reports different
degrees, we marked it in the respective column. Due to the randomly
chosen large characteristic these degrees should be the same one
obtains when considering the algebraic set in question over the
complex numbers.

In the second column of this table, we additionally provide the number
of affine cells that Algorithm \emph{equidim} decomposed the
respective system into. All timings in this table are given in
seconds. Due to the way we measured the timings of Bertini we can only
report them without any decimal places, rounded up.

\subsection{Discussion of Experimental Results}
\label{sec:disc}

We provide here some further information about some of the examples
and the behaviour of the different implementations on these examples
compared below.

Our algorithm, i.e. Algorithm \emph{equidim}, seems to behave best in
comparison to the other implementations when the input system is dense
in the sense that each of the input equations of the system in
question involves most, or all, of the variables. This is the case for
cyclic 8, the class of the Ps($\bullet$) systems, the class of the
Sing($\bullet$) systems, the class of the sos($\bullet,\bullet$) systems and the Steiner
polynomial system.

On certain polynomial systems, where each input equation involves only
a small subset of the variables, we were able to improve our timings
by foregoing the witness set based data structure and instead running
a deterministic version of our algorithm akin to the version in
Section~\ref{sec:algop}. The improvement we thusly obtained can be
explained by the fact that intersecting very sparse systems with
random hyperplanes can ``destroy their sparsity'' and make certain
Gröbner basis computations much harder. This was the case for the
example Leykin-1: Here running the deterministic version improved our
timing to 2.6 seconds.

The Gonnet and dgp6 polynomial systems demonstrate that our algorithm
is highly sensitive to the ordering of the input equations: By default
we ran our implementation by iterating over the input equations degree
by degree in Algorithm \emph{equidim}. With this ordering, our
algorithm did not terminate within several hours of computation. When
we changed this ordering on these two examples and sorted the input
equations instead by length of support, our algorithm terminated in
less than one second on these two examples. Our algorithms practical
efficiency depends highly on the difficulty of the intermediate
polynomial systems which it encounters, these in turn depend on the
order of the input equations. On the Gonnet polynomial system, the new
ordering resulted in only a subset of the variables being involved in
the first few intermediate systems, thus making Gröbner basis for them
more tractable. On the dgp6 example the new ordering resulted in an
intermediate polynomial system consisting of monomials and binomials
which our algorithm decomposes very finely, making the treatment of
the remaining equations substantially easier

The system sys2874 can be attacked by both changing the order of the
input equations to be ordered by length of support and by using the
deterministic version of our algorithm: Doing this, the timing
improved by several orders of magnitude to 0.26 seconds.

We also remark that \Oscar's timings improved significantly on
the examples sys2449, sys2297 and Leykin-1 (each to less than one
second) if one decomposes the radicals of these systems instead of the
systems themselves.

For the examples KdV and sys2882 we seem to be bottlenecked by very
difficult Gröbner basis computations and less by the inherent
structure of our algorithm. Informal experiments where we tried to
compute just a Gröbner basis for these systems using msolve suggest
that even this is a highly non-trivial computation. For these two
systems, techniques involving regular chains seem to be vastly
superior over anything that involves Gröbner basis computations.

All in all, these experiments illustrate that on a wide range of
examples, our algorithm performs on average better than
state-of-the-art implementations and can tackle some problems which
were previously unrea
\section*{Acknowledgements}
  The authors wish to thank Marc Moreno Maza for providing feedback on
  the BPAS library and its benchmark examples. Additionally the
  authors thank the two anonymous reviewers for their helpful comments
  and careful reading of this manuscript.
  
  This work has been supported by the Agence nationale de la recherche
  (ANR), grant agreements ANR-18-CE33-0011 (SESAME), ANR-19-CE40-0018
  (Re Rerum Natura); by the joint ANR-Austrian Science Fund FWF grant
  agreements ANR-19-CE48-0015 (ECARP) and ANR-22-CE91-0007 (EAGLES);
  by the EOARD-AFOSR grant agreement FA8665-20-1-7029; by the DFG
  Sonderforschungsbereich TRR 195; the Forschungsinitiative
  Rheinland-Pfalz; and by the European Research Council (ERC) under
  the European Union’s Horizon Europe research and innovation
  programme, grant agreement 101040794 (10000~DIGITS).

\newpage
\begin{table*}[p]
  \label{tbl:compare}
  \begin{adjustwidth}{-1.5cm}{-1cm}
    \begin{center}
      \scriptsize{\begin{tabular}{llllllll}
\toprule name & nb. comp. & \fn{equidim} & Maple & Oscar & Magma & Magma (prime dec.) & \color{Gray} Bertini\\ \midrule
8-3-config-Li & 23 & \bfseries\color{SeaGreen} {\small \phantom{\faWarning}}\phantom{000}1.6 & \bfseries\color{BurntOrange} {\small \phantom{\faWarning}}\phantom{00}16\phantom{.0} \footnotesize{\mdseries ×\phantom{}10\phantom{.0}} & \bfseries\color{BrickRed} {\small \phantom{\faWarning}}\phantom{0000}\llap{>\,1h} & \bfseries\color{BrickRed} {\small \phantom{\faWarning}}\phantom{0000}\llap{>\,1h} & \bfseries\color{BrickRed} {\small \phantom{\faWarning}}\phantom{00}65\phantom{.0} \footnotesize{\mdseries ×\phantom{}40\phantom{.0}} & \bfseries\color{Gray} {\small \phantom{\faWarning}}\phantom{000}4\phantom{.0} \footnotesize{\mdseries ×\phantom{}2.5}\\
cyclic8 & 6 & \bfseries\color{SeaGreen} {\small \phantom{\faWarning}}\phantom{0}381\phantom{.0} & \bfseries\color{BrickRed} {\small \phantom{\faWarning}}\phantom{0000}\llap{>\,5h} & \bfseries\color{BrickRed} {\small \phantom{\faWarning}}\phantom{0000}\llap{>\,5h} & \bfseries\color{BrickRed} {\small \phantom{\faWarning}}\phantom{0000}\llap{>\,5h} & \bfseries\color{BrickRed} {\small \phantom{\faWarning}}\phantom{0000}\llap{>\,5h} & \bfseries\color{Gray} {\small \faWarning}\phantom{0}126\phantom{.0} \footnotesize{\mdseries ×\phantom{}0.3}\\
dgp6 & 3 & \bfseries\color{SeaGreen} {\small \faHandPaperO}\phantom{000}0.2 & \bfseries\color{BrickRed} {\small \phantom{\faWarning}}\phantom{00}53\phantom{.0} & \bfseries\color{BurntOrange} {\small \phantom{\faWarning}}\phantom{000}2.2 & \bfseries\color{BrickRed} {\small \phantom{\faWarning}}\phantom{0000}\llap{>\,1h} & \bfseries\color{OliveGreen} {\small \phantom{\faWarning}}\phantom{000}1.2 & \bfseries\color{Gray} {\small \phantom{\faWarning}}\phantom{00}75\phantom{.0}\\
Gonnet & 3 & \bfseries\color{SeaGreen} {\small \faHandPaperO}\phantom{000}0.2 & \bfseries\color{BurntOrange} {\small \phantom{\faWarning}}\phantom{000}2.1 & \bfseries\color{BurntOrange} {\small \phantom{\faWarning}}\phantom{000}2.8 & \bfseries\color{BrickRed} {\small \phantom{\faWarning}}\phantom{0000}\llap{>\,1h} & \bfseries\color{OliveGreen} {\small \phantom{\faWarning}}\phantom{000}1.4 & \bfseries\color{Gray} {\small \phantom{\faWarning}}\phantom{00}74\phantom{.0}\\
P4L1 & 6 & \bfseries\color{SeaGreen} {\small \phantom{\faWarning}}\phantom{000}0.3 & \bfseries\color{BurntOrange} {\small \phantom{\faWarning}}\phantom{000}2.4 & \bfseries\color{OliveGreen} {\small \phantom{\faWarning}}\phantom{000}1.8 & \bfseries\color{OliveGreen} {\small \phantom{\faWarning}}\phantom{000}0.7 & \bfseries\color{OliveGreen} {\small \phantom{\faWarning}}\phantom{000}1.5 & \bfseries\color{Gray} {\small \faWarning}\phantom{00}21\phantom{.0}\\
P4L3 & 8 & \bfseries\color{OliveGreen} {\small \phantom{\faWarning}}\phantom{000}0.3 & \bfseries\color{BurntOrange} {\small \phantom{\faWarning}}\phantom{000}3.3 & \bfseries\color{BurntOrange} {\small \phantom{\faWarning}}\phantom{000}10\phantom{.0} & \bfseries\color{SeaGreen} {\small \phantom{\faWarning}}\phantom{000}\llap{<\,}0.1 & \bfseries\color{OliveGreen} {\small \phantom{\faWarning}}\phantom{000}1.5 & \bfseries\color{Gray} {\small \phantom{\faWarning}}\phantom{00}11\phantom{.0}\\
KdV &  & \bfseries\color{BrickRed} {\small \phantom{\faWarning}}\phantom{0000}\llap{>\,4h} & \bfseries\color{SeaGreen} {\small \phantom{\faWarning}}\phantom{0}353\phantom{.0} & \bfseries\color{BrickRed} {\small \phantom{\faWarning}}\phantom{0000}\llap{>\,4h} & \bfseries\color{BrickRed} {\small \phantom{\faWarning}}\phantom{0000}\llap{>\,4h} & \bfseries\color{BrickRed} {\small \phantom{\faWarning}}\phantom{}7109\phantom{.0} \footnotesize{\mdseries ×\phantom{}20\phantom{.0}} & \bfseries\color{Gray} {\small \phantom{\faWarning}}\phantom{0000}\llap{>\,4h}\\
Leykin-1 & 13 & \bfseries\color{OliveGreen} {\small \faHandPaperO}\phantom{000}2.6 \footnotesize{\mdseries ×\phantom{}1.9} & \bfseries\color{BurntOrange} {\small \phantom{\faWarning}}\phantom{000}4\phantom{.0} \footnotesize{\mdseries ×\phantom{}3.2} & \bfseries\color{BrickRed} {\small \phantom{\faWarning}}\phantom{0}641\phantom{.0} \footnotesize{\mdseries ×\phantom{}468\phantom{.0}} & \bfseries\color{BrickRed} {\small \phantom{\faWarning}}\phantom{0000}\llap{>\,1h} & \bfseries\color{SeaGreen} {\small \phantom{\faWarning}}\phantom{000}1.4 & \color{Gray} \small \faClose\\
C1 & 4 & \bfseries\color{SeaGreen} {\small \phantom{\faWarning}}\phantom{0}129\phantom{.0} & \bfseries\color{BrickRed} {\small \phantom{\faWarning}}\phantom{0000}\llap{>\,1h} & \bfseries\color{BrickRed} {\small \phantom{\faWarning}}\phantom{0000}\llap{>\,1h} & \bfseries\color{BrickRed} {\small \phantom{\faWarning}}\phantom{0000}\llap{>\,1h} & \bfseries\color{BrickRed} {\small \phantom{\faWarning}}\phantom{0000}\llap{>\,1h} & \color{Gray} \small \faClose\\
C2 & 4 & \bfseries\color{SeaGreen} {\small \phantom{\faWarning}}\phantom{000}0.3 & \bfseries\color{BrickRed} {\small \phantom{\faWarning}}\phantom{0}100\phantom{.0} & \bfseries\color{BrickRed} {\small \phantom{\faWarning}}\phantom{0}152\phantom{.0} & \bfseries\color{BrickRed} {\small \phantom{\faWarning}}\phantom{0000}\llap{>\,1h} & \bfseries\color{OliveGreen} {\small \phantom{\faWarning}}\phantom{000}2.0 & \color{Gray} \small \faClose\\
C3 & 13 & \bfseries\color{BurntOrange} {\small \phantom{\faWarning}}\phantom{000}10\phantom{.0} & \bfseries\color{BrickRed} {\small \phantom{\faWarning}}\phantom{00}55\phantom{.0} & \bfseries\color{BurntOrange} {\small \phantom{\faWarning}}\phantom{000}7\phantom{.0} & \bfseries\color{SeaGreen} {\small \phantom{\faWarning}}\phantom{000}0.3 & \bfseries\color{OliveGreen} {\small \phantom{\faWarning}}\phantom{000}1.5 & \color{Gray} \small \faClose\\
MontesS16 & 6 & \bfseries\color{OliveGreen} {\small \phantom{\faWarning}}\phantom{000}1.9 \footnotesize{\mdseries ×\phantom{}1.4} & \bfseries\color{OliveGreen} {\small \phantom{\faWarning}}\phantom{000}2.7 \footnotesize{\mdseries ×\phantom{}1.9} & \bfseries\color{OliveGreen} {\small \phantom{\faWarning}}\phantom{000}2.0 \footnotesize{\mdseries ×\phantom{}1.4} & \bfseries\color{SeaGreen} {\small \phantom{\faWarning}}\phantom{000}1.4 & \bfseries\color{OliveGreen} {\small \phantom{\faWarning}}\phantom{000}1.5 \footnotesize{\mdseries ×\phantom{}1.1} & \bfseries\color{Gray} {\small \phantom{\faWarning}}\phantom{000}7\phantom{.0} \footnotesize{\mdseries ×\phantom{}5\phantom{.0}}\\
Ps(10) & 2 & \bfseries\color{SeaGreen} {\small \phantom{\faWarning}}\phantom{000}1.7 & \bfseries\color{BrickRed} {\small \phantom{\faWarning}}\phantom{0000}\llap{>\,1h} & \bfseries\color{BrickRed} {\small \phantom{\faWarning}}\phantom{00}30\phantom{.0} \footnotesize{\mdseries ×\phantom{}17\phantom{.0}} & \bfseries\color{BrickRed} {\small \phantom{\faWarning}}\phantom{0000}\llap{>\,1h} & \bfseries\color{BurntOrange} {\small \phantom{\faWarning}}\phantom{000}6\phantom{.0} \footnotesize{\mdseries ×\phantom{}3.3} & \bfseries\color{Gray} {\small \phantom{\faWarning}}\phantom{000}9\phantom{.0} \footnotesize{\mdseries ×\phantom{}5\phantom{.0}}\\
Ps(12) & 2 & \bfseries\color{SeaGreen} {\small \phantom{\faWarning}}\phantom{00}51\phantom{.0} & \bfseries\color{BrickRed} {\small \phantom{\faWarning}}\phantom{0000}\llap{>\,1h} & \bfseries\color{BrickRed} {\small \phantom{\faWarning}}\phantom{0000}\llap{>\,1h} & \bfseries\color{BrickRed} {\small \phantom{\faWarning}}\phantom{0000}\llap{>\,1h} & \bfseries\color{BrickRed} {\small \phantom{\faWarning}}\phantom{}2060\phantom{.0} \footnotesize{\mdseries ×\phantom{}40\phantom{.0}} & \bfseries\color{Gray} {\small \faWarning}\phantom{00}38\phantom{.0} \footnotesize{\mdseries ×\phantom{}0.7}\\
Ps(6) & 2 & \bfseries\color{SeaGreen} {\small \phantom{\faWarning}}\phantom{000}\llap{<\,}0.1 & \bfseries\color{OliveGreen} {\small \phantom{\faWarning}}\phantom{000}0.2 & \bfseries\color{OliveGreen} {\small \phantom{\faWarning}}\phantom{000}1.7 & \bfseries\color{OliveGreen} {\small \phantom{\faWarning}}\phantom{000}0.5 & \bfseries\color{OliveGreen} {\small \phantom{\faWarning}}\phantom{000}0.3 & \bfseries\color{Gray} {\small \phantom{\faWarning}}\phantom{000}2.0\\
Ps(8) & 2 & \bfseries\color{SeaGreen} {\small \phantom{\faWarning}}\phantom{000}\llap{<\,}0.1 & \bfseries\color{BurntOrange} {\small \phantom{\faWarning}}\phantom{000}4\phantom{.0} & \bfseries\color{OliveGreen} {\small \phantom{\faWarning}}\phantom{000}1.7 & \bfseries\color{OliveGreen} {\small \phantom{\faWarning}}\phantom{000}1.2 & \bfseries\color{OliveGreen} {\small \phantom{\faWarning}}\phantom{000}0.8 & \bfseries\color{Gray} {\small \phantom{\faWarning}}\phantom{000}6\phantom{.0}\\
Sing(10) & 2 & \bfseries\color{SeaGreen} {\small \phantom{\faWarning}}\phantom{000}0.4 & \bfseries\color{BrickRed} {\small \phantom{\faWarning}}\phantom{0000}\llap{>\,1h} & \bfseries\color{BrickRed} {\small \phantom{\faWarning}}\phantom{0000}\llap{>\,1h} & \bfseries\color{BrickRed} {\small \phantom{\faWarning}}\phantom{0000}\llap{>\,1h} & \bfseries\color{BrickRed} {\small \phantom{\faWarning}}\phantom{0000}\llap{>\,1h} & \bfseries\color{Gray} {\small \faWarning}\phantom{0}495\phantom{.0}\\
Sing(4) & 2 & \bfseries\color{SeaGreen} {\small \phantom{\faWarning}}\phantom{000}\llap{<\,}0.1 & \bfseries\color{BrickRed} {\small \phantom{\faWarning}}\phantom{00}76\phantom{.0} & \bfseries\color{BurntOrange} {\small \phantom{\faWarning}}\phantom{000}2.2 & \bfseries\color{BurntOrange} {\small \phantom{\faWarning}}\phantom{000}8\phantom{.0} & \bfseries\color{BurntOrange} {\small \phantom{\faWarning}}\phantom{000}5\phantom{.0} & \bfseries\color{Gray} {\small \phantom{\faWarning}}\phantom{000}5\phantom{.0}\\
Sing(5) & 2 & \bfseries\color{SeaGreen} {\small \phantom{\faWarning}}\phantom{000}\llap{<\,}0.1 & \bfseries\color{BrickRed} {\small \phantom{\faWarning}}\phantom{0000}\llap{>\,1h} & \bfseries\color{BurntOrange} {\small \phantom{\faWarning}}\phantom{000}4\phantom{.0} & \bfseries\color{BurntOrange} {\small \phantom{\faWarning}}\phantom{000}7\phantom{.0} & \bfseries\color{BrickRed} {\small \phantom{\faWarning}}\phantom{}1636\phantom{.0} & \bfseries\color{Gray} {\small \faWarning}\phantom{000}1.0\\
Sing(6) & 2 & \bfseries\color{SeaGreen} {\small \phantom{\faWarning}}\phantom{000}\llap{<\,}0.1 & \bfseries\color{BrickRed} {\small \phantom{\faWarning}}\phantom{0000}\llap{>\,1h} & \bfseries\color{BrickRed} {\small \phantom{\faWarning}}\phantom{00}51\phantom{.0} & \bfseries\color{BrickRed} {\small \phantom{\faWarning}}\phantom{0000}\llap{>\,1h} & \bfseries\color{BrickRed} {\small \phantom{\faWarning}}\phantom{0000}\llap{>\,1h} & \bfseries\color{Gray} {\small \faWarning}\phantom{000}8\phantom{.0}\\
Sing(7) & 2 & \bfseries\color{SeaGreen} {\small \phantom{\faWarning}}\phantom{000}\llap{<\,}0.1 & \bfseries\color{BrickRed} {\small \phantom{\faWarning}}\phantom{}1704\phantom{.0} & \bfseries\color{BrickRed} {\small \phantom{\faWarning}}\phantom{0}399\phantom{.0} & \bfseries\color{BrickRed} {\small \phantom{\faWarning}}\phantom{0000}\llap{>\,1h} & \bfseries\color{BrickRed} {\small \phantom{\faWarning}}\phantom{0000}\llap{>\,1h} & \bfseries\color{Gray} {\small \phantom{\faWarning}}\phantom{00}54\phantom{.0}\\
Sing(8) & 2 & \bfseries\color{SeaGreen} {\small \phantom{\faWarning}}\phantom{000}0.1 & \bfseries\color{BrickRed} {\small \phantom{\faWarning}}\phantom{0000}\llap{>\,1h} & \bfseries\color{BrickRed} {\small \phantom{\faWarning}}\phantom{0}995\phantom{.0} & \bfseries\color{BrickRed} {\small \phantom{\faWarning}}\phantom{0000}\llap{>\,1h} & \bfseries\color{BrickRed} {\small \phantom{\faWarning}}\phantom{0000}\llap{>\,1h} & \bfseries\color{Gray} {\small \phantom{\faWarning}}\phantom{0}139\phantom{.0}\\
Sing(9) & 2 & \bfseries\color{SeaGreen} {\small \phantom{\faWarning}}\phantom{000}0.2 & \bfseries\color{BrickRed} {\small \phantom{\faWarning}}\phantom{0000}\llap{>\,1h} & \bfseries\color{BrickRed} {\small \phantom{\faWarning}}\phantom{0000}\llap{>\,1h} & \bfseries\color{BrickRed} {\small \phantom{\faWarning}}\phantom{0000}\llap{>\,1h} & \bfseries\color{BrickRed} {\small \phantom{\faWarning}}\phantom{0000}\llap{>\,1h} & \bfseries\color{Gray} {\small \faWarning}\phantom{0}271\phantom{.0}\\
sos(4,2) & 2 & \bfseries\color{SeaGreen} {\small \phantom{\faWarning}}\phantom{000}\llap{<\,}0.1 & \bfseries\color{BrickRed} {\small \phantom{\faWarning}}\phantom{00}16\phantom{.0} & \bfseries\color{BurntOrange} {\small \phantom{\faWarning}}\phantom{000}2.2 & \bfseries\color{OliveGreen} {\small \phantom{\faWarning}}\phantom{000}1.3 & \bfseries\color{OliveGreen} {\small \phantom{\faWarning}}\phantom{000}1.2 & \bfseries\color{Gray} {\small \phantom{\faWarning}}\phantom{000}1.0\\
sos(4,3) & 2 & \bfseries\color{SeaGreen} {\small \phantom{\faWarning}}\phantom{000}\llap{<\,}0.1 & \bfseries\color{BrickRed} {\small \phantom{\faWarning}}\phantom{0}694\phantom{.0} & \bfseries\color{BurntOrange} {\small \phantom{\faWarning}}\phantom{000}2.6 & \bfseries\color{BurntOrange} {\small \phantom{\faWarning}}\phantom{000}3.4 & \bfseries\color{BurntOrange} {\small \phantom{\faWarning}}\phantom{000}6\phantom{.0} & \bfseries\color{Gray} {\small \phantom{\faWarning}}\phantom{000}3.0\\
sos(5,2) & 2 & \bfseries\color{SeaGreen} {\small \phantom{\faWarning}}\phantom{000}\llap{<\,}0.1 & \bfseries\color{BrickRed} {\small \phantom{\faWarning}}\phantom{0000}\llap{>\,1h} & \bfseries\color{OliveGreen} {\small \phantom{\faWarning}}\phantom{000}1.8 & \bfseries\color{BurntOrange} {\small \phantom{\faWarning}}\phantom{000}3.7 & \bfseries\color{OliveGreen} {\small \phantom{\faWarning}}\phantom{000}1.2 & \bfseries\color{Gray} {\small \phantom{\faWarning}}\phantom{000}3.0\\
sos(5,3) & 2 & \bfseries\color{SeaGreen} {\small \phantom{\faWarning}}\phantom{000}\llap{<\,}0.1 & \bfseries\color{BrickRed} {\small \phantom{\faWarning}}\phantom{0000}\llap{>\,1h} & \bfseries\color{BrickRed} {\small \phantom{\faWarning}}\phantom{0000}\llap{>\,1h} & \bfseries\color{BrickRed} {\small \phantom{\faWarning}}\phantom{0000}\llap{>\,1h} & \bfseries\color{BrickRed} {\small \phantom{\faWarning}}\phantom{0}149\phantom{.0} & \bfseries\color{Gray} {\small \phantom{\faWarning}}\phantom{00}15\phantom{.0}\\
sos(5,4) & 2 & \bfseries\color{SeaGreen} {\small \phantom{\faWarning}}\phantom{000}0.5 & \bfseries\color{BrickRed} {\small \phantom{\faWarning}}\phantom{0000}\llap{>\,1h} & \bfseries\color{BrickRed} {\small \phantom{\faWarning}}\phantom{0000}\llap{>\,1h} & \bfseries\color{BrickRed} {\small \phantom{\faWarning}}\phantom{0000}\llap{>\,1h} & \bfseries\color{BrickRed} {\small \phantom{\faWarning}}\phantom{0000}\llap{>\,1h} & \bfseries\color{Gray} {\small \phantom{\faWarning}}\phantom{00}21\phantom{.0}\\
sos(6,2) & 2 & \bfseries\color{SeaGreen} {\small \phantom{\faWarning}}\phantom{000}\llap{<\,}0.1 & \bfseries\color{BrickRed} {\small \phantom{\faWarning}}\phantom{0000}\llap{>\,1h} & \bfseries\color{BurntOrange} {\small \phantom{\faWarning}}\phantom{000}2.0 & \bfseries\color{BurntOrange} {\small \phantom{\faWarning}}\phantom{000}5\phantom{.0} & \bfseries\color{OliveGreen} {\small \phantom{\faWarning}}\phantom{000}1.6 & \bfseries\color{Gray} {\small \phantom{\faWarning}}\phantom{000}5\phantom{.0}\\
sos(6,3) & 2 & \bfseries\color{SeaGreen} {\small \phantom{\faWarning}}\phantom{000}0.1 & \bfseries\color{BrickRed} {\small \phantom{\faWarning}}\phantom{0000}\llap{>\,1h} & \bfseries\color{BrickRed} {\small \phantom{\faWarning}}\phantom{0000}\llap{>\,1h} & \bfseries\color{BrickRed} {\small \phantom{\faWarning}}\phantom{0000}\llap{>\,1h} & \bfseries\color{BrickRed} {\small \phantom{\faWarning}}\phantom{0000}\llap{>\,1h} & \bfseries\color{Gray} {\small \phantom{\faWarning}}\phantom{00}34\phantom{.0}\\
sos(6,4) & 2 & \bfseries\color{SeaGreen} {\small \phantom{\faWarning}}\phantom{000}5\phantom{.0} & \bfseries\color{BrickRed} {\small \phantom{\faWarning}}\phantom{0000}\llap{>\,1h} & \bfseries\color{BrickRed} {\small \phantom{\faWarning}}\phantom{0000}\llap{>\,1h} & \bfseries\color{BrickRed} {\small \phantom{\faWarning}}\phantom{0000}\llap{>\,1h} & \bfseries\color{BrickRed} {\small \phantom{\faWarning}}\phantom{0000}\llap{>\,1h} & \bfseries\color{Gray} {\small \phantom{\faWarning}}\phantom{00}69\phantom{.0} \footnotesize{\mdseries ×\phantom{}14\phantom{.0}}\\
sos(6,5) & 2 & \bfseries\color{SeaGreen} {\small \phantom{\faWarning}}\phantom{00}14\phantom{.0} & \bfseries\color{BrickRed} {\small \phantom{\faWarning}}\phantom{0000}\llap{>\,1h} & \bfseries\color{BrickRed} {\small \phantom{\faWarning}}\phantom{0000}\llap{>\,1h} & \bfseries\color{BrickRed} {\small \phantom{\faWarning}}\phantom{0000}\llap{>\,1h} & \bfseries\color{BrickRed} {\small \phantom{\faWarning}}\phantom{0000}\llap{>\,1h} & \bfseries\color{Gray} {\small \faWarning}\phantom{00}40\phantom{.0} \footnotesize{\mdseries ×\phantom{}2.9}\\
steiner & 2 & \bfseries\color{SeaGreen} {\small \phantom{\faWarning}}\phantom{0}870\phantom{.0} & \bfseries\color{BrickRed} {\small \phantom{\faWarning}}\phantom{0000}\llap{>\,12h} & \bfseries\color{BrickRed} {\small \phantom{\faWarning}}\phantom{0000}\llap{>\,12h} & \bfseries\color{BrickRed} {\small \phantom{\faWarning}}\phantom{0000}\llap{>\,12h} & \bfseries\color{BrickRed} {\small \phantom{\faWarning}}\phantom{0000}\llap{>\,12h} & \color{Gray} \small \faClose\\
sys2128 & 20 & \bfseries\color{SeaGreen} {\small \phantom{\faWarning}}\phantom{000}1.1 & \bfseries\color{BurntOrange} {\small \phantom{\faWarning}}\phantom{000}9\phantom{.0} \footnotesize{\mdseries ×\phantom{}9\phantom{.0}} & \bfseries\color{BurntOrange} {\small \phantom{\faWarning}}\phantom{000}5\phantom{.0} \footnotesize{\mdseries ×\phantom{}4\phantom{.0}} & \bfseries\color{BrickRed} {\small \phantom{\faWarning}}\phantom{0000}\llap{>\,1h} & \bfseries\color{OliveGreen} {\small \phantom{\faWarning}}\phantom{000}1.9 \footnotesize{\mdseries ×\phantom{}1.8} & \bfseries\color{Gray} {\small \phantom{\faWarning}}\phantom{0}829\phantom{.0} \footnotesize{\mdseries ×\phantom{}790\phantom{.0}}\\
sys2161 & 33 & \bfseries\color{SeaGreen} {\small \phantom{\faWarning}}\phantom{000}8\phantom{.0} & \bfseries\color{BurntOrange} {\small \phantom{\faWarning}}\phantom{00}29\phantom{.0} \footnotesize{\mdseries ×\phantom{}3.8} & \bfseries\color{BrickRed} {\small \phantom{\faWarning}}\phantom{0000}\llap{>\,1h} & \bfseries\color{BrickRed} {\small \phantom{\faWarning}}\phantom{0000}\llap{>\,1h} & \bfseries\color{OliveGreen} {\small \phantom{\faWarning}}\phantom{000}8\phantom{.0} \footnotesize{\mdseries ×\phantom{}1.0} & \bfseries\color{Gray} {\small \phantom{\faWarning}}\phantom{}1196\phantom{.0} \footnotesize{\mdseries ×\phantom{}159\phantom{.0}}\\
sys2297 & 11 & \bfseries\color{SeaGreen} {\small \phantom{\faWarning}}\phantom{000}0.5 & \bfseries\color{BrickRed} {\small \phantom{\faWarning}}\phantom{00}14\phantom{.0} & \bfseries\color{BrickRed} {\small \phantom{\faWarning}}\phantom{00}49\phantom{.0} & \bfseries\color{BrickRed} {\small \phantom{\faWarning}}\phantom{0000}\llap{>\,1h} & \bfseries\color{OliveGreen} {\small \phantom{\faWarning}}\phantom{000}1.7 & \bfseries\color{Gray} {\small \faWarning}\phantom{0}497\phantom{.0}\\
sys2353 & 13 & \bfseries\color{OliveGreen} {\small \phantom{\faWarning}}\phantom{000}1.6 \footnotesize{\mdseries ×\phantom{}1.2} & \bfseries\color{BurntOrange} {\small \phantom{\faWarning}}\phantom{000}5\phantom{.0} \footnotesize{\mdseries ×\phantom{}3.8} & \bfseries\color{OliveGreen} {\small \phantom{\faWarning}}\phantom{000}2.0 \footnotesize{\mdseries ×\phantom{}1.5} & \bfseries\color{BurntOrange} {\small \phantom{\faWarning}}\phantom{000}5\phantom{.0} \footnotesize{\mdseries ×\phantom{}3.7} & \bfseries\color{SeaGreen} {\small \phantom{\faWarning}}\phantom{000}1.3 & \bfseries\color{Gray} {\small \phantom{\faWarning}}\phantom{0000}\llap{>\,1h}\\
sys2449 & 24 & \bfseries\color{SeaGreen} {\small \phantom{\faWarning}}\phantom{000}1.3 & \bfseries\color{BrickRed} {\small \phantom{\faWarning}}\phantom{00}28\phantom{.0} \footnotesize{\mdseries ×\phantom{}21\phantom{.0}} & \bfseries\color{BrickRed} {\small \phantom{\faWarning}}\phantom{00}60\phantom{.0} \footnotesize{\mdseries ×\phantom{}46\phantom{.0}} & \bfseries\color{BrickRed} {\small \phantom{\faWarning}}\phantom{0000}\llap{>\,1h} & \bfseries\color{OliveGreen} {\small \phantom{\faWarning}}\phantom{000}2.2 \footnotesize{\mdseries ×\phantom{}1.7} & \bfseries\color{Gray} {\small \phantom{\faWarning}}\phantom{}1338\phantom{.0} \footnotesize{\mdseries ×\phantom{}1014\phantom{.0}}\\
sys2647 & 2 & \bfseries\color{SeaGreen} {\small \phantom{\faWarning}}\phantom{000}\llap{<\,}0.1 & \bfseries\color{BurntOrange} {\small \phantom{\faWarning}}\phantom{000}7\phantom{.0} & \bfseries\color{BurntOrange} {\small \phantom{\faWarning}}\phantom{000}4\phantom{.0} & \bfseries\color{BrickRed} {\small \phantom{\faWarning}}\phantom{0000}\llap{>\,1h} & \bfseries\color{OliveGreen} {\small \phantom{\faWarning}}\phantom{000}2.0 & \bfseries\color{Gray} {\small \faWarning}\phantom{000}9\phantom{.0}\\
sys2874 & 5 & \bfseries\color{SeaGreen} {\small \faHandPaperO}\phantom{000}0.3 & \bfseries\color{BrickRed} {\small \phantom{\faWarning}}\phantom{0}202\phantom{.0} & \bfseries\color{OliveGreen} {\small \phantom{\faWarning}}\phantom{000}1.9 & \bfseries\color{BurntOrange} {\small \phantom{\faWarning}}\phantom{000}8\phantom{.0} & \bfseries\color{BurntOrange} {\small \phantom{\faWarning}}\phantom{000}10\phantom{.0} & \bfseries\color{Gray} {\small \phantom{\faWarning}}\phantom{0000}\llap{>\,1h}\\
sys2880 & 50 & \bfseries\color{BurntOrange} {\small \phantom{\faWarning}}\phantom{000}4\phantom{.0} \footnotesize{\mdseries ×\phantom{}2.4} & \bfseries\color{BrickRed} {\small \phantom{\faWarning}}\phantom{0}144\phantom{.0} \footnotesize{\mdseries ×\phantom{}80\phantom{.0}} & \bfseries\color{SeaGreen} {\small \phantom{\faWarning}}\phantom{000}1.8 & \bfseries\color{OliveGreen} {\small \phantom{\faWarning}}\phantom{000}3.4 \footnotesize{\mdseries ×\phantom{}1.9} & \bfseries\color{BurntOrange} {\small \phantom{\faWarning}}\phantom{000}4\phantom{.0} \footnotesize{\mdseries ×\phantom{}2.2} & \bfseries\color{Gray} {\small \faWarning}\phantom{0}324\phantom{.0} \footnotesize{\mdseries ×\phantom{}180\phantom{.0}}\\
sys2882 &  & \bfseries\color{BrickRed} {\small \phantom{\faWarning}}\phantom{0000}\llap{>\,1h} & \bfseries\color{SeaGreen} {\small \phantom{\faWarning}}\phantom{00}39\phantom{.0} & \bfseries\color{BrickRed} {\small \phantom{\faWarning}}\phantom{0000}\llap{>\,1h} & \bfseries\color{BrickRed} {\small \phantom{\faWarning}}\phantom{0000}\llap{>\,1h} & \bfseries\color{BrickRed} {\small \phantom{\faWarning}}\phantom{0000}\llap{>\,1h} & \color{Gray} \small \faClose\\
sys2885 & 2 & \bfseries\color{SeaGreen} {\small \phantom{\faWarning}}\phantom{000}0.3 & \bfseries\color{BurntOrange} {\small \phantom{\faWarning}}\phantom{000}6\phantom{.0} & \bfseries\color{BurntOrange} {\small \phantom{\faWarning}}\phantom{000}2.2 & \bfseries\color{BrickRed} {\small \phantom{\faWarning}}\phantom{0000}\llap{>\,1h} & \bfseries\color{OliveGreen} {\small \phantom{\faWarning}}\phantom{000}1.2 & \bfseries\color{Gray} {\small \phantom{\faWarning}}\phantom{00}76\phantom{.0}\\
sys2945 & 5 & \bfseries\color{SeaGreen} {\small \phantom{\faWarning}}\phantom{000}0.5 & \bfseries\color{BurntOrange} {\small \phantom{\faWarning}}\phantom{000}3.2 & \bfseries\color{OliveGreen} {\small \phantom{\faWarning}}\phantom{000}1.8 & \bfseries\color{OliveGreen} {\small \phantom{\faWarning}}\phantom{000}0.6 & \bfseries\color{OliveGreen} {\small \phantom{\faWarning}}\phantom{000}1.0 & \bfseries\color{Gray} {\small \phantom{\faWarning}}\phantom{0}120\phantom{.0}\\
sys2946 & 7 & \bfseries\color{SeaGreen} {\small \phantom{\faWarning}}\phantom{000}0.2 & \bfseries\color{OliveGreen} {\small \phantom{\faWarning}}\phantom{000}0.7 & \bfseries\color{BurntOrange} {\small \phantom{\faWarning}}\phantom{000}2.1 & \bfseries\color{BrickRed} {\small \phantom{\faWarning}}\phantom{0000}\llap{>\,1h} & \bfseries\color{OliveGreen} {\small \phantom{\faWarning}}\phantom{000}1.6 & \bfseries\color{Gray} {\small \phantom{\faWarning}}\phantom{000}3.0\\
W2 & 4 & \bfseries\color{SeaGreen} {\small \phantom{\faWarning}}\phantom{000}0.9 & \bfseries\color{BurntOrange} {\small \phantom{\faWarning}}\phantom{000}6\phantom{.0} & \bfseries\color{OliveGreen} {\small \phantom{\faWarning}}\phantom{000}1.9 & \bfseries\color{BurntOrange} {\small \phantom{\faWarning}}\phantom{000}7\phantom{.0} & \bfseries\color{OliveGreen} {\small \phantom{\faWarning}}\phantom{000}1.0 & \bfseries\color{Gray} {\small \phantom{\faWarning}}\phantom{00}61\phantom{.0}\\
W44 & 3 & \bfseries\color{SeaGreen} {\small \phantom{\faWarning}}\phantom{000}0.5 & \bfseries\color{BrickRed} {\small \phantom{\faWarning}}\phantom{00}13\phantom{.0} & \bfseries\color{BurntOrange} {\small \phantom{\faWarning}}\phantom{000}4\phantom{.0} & \bfseries\color{BrickRed} {\small \phantom{\faWarning}}\phantom{0000}\llap{>\,1h} & \bfseries\color{OliveGreen} {\small \phantom{\faWarning}}\phantom{000}1.5 & \bfseries\color{Gray} {\small \faWarning}\phantom{00}66\phantom{.0}\\
Wu-Wang & 3 & \bfseries\color{BurntOrange} {\small \phantom{\faWarning}}\phantom{000}3.1 & \bfseries\color{BurntOrange} {\small \phantom{\faWarning}}\phantom{000}3.2 & \bfseries\color{OliveGreen} {\small \phantom{\faWarning}}\phantom{000}1.8 & \bfseries\color{SeaGreen} {\small \phantom{\faWarning}}\phantom{000}0.7 & \bfseries\color{OliveGreen} {\small \phantom{\faWarning}}\phantom{000}1.3 & \bfseries\color{Gray} {\small \phantom{\faWarning}}\phantom{0}106\phantom{.0}\\
\bottomrule\end{tabular}
}
    \end{center}
  \end{adjustwidth}

    \bigskip
    Timings are in seconds, except otherwise indicated. The ratio with respect to the best time is given when the latter is over 1~second.

    \begin{description}[labelsep=1em]
      \item[\faHandPaperO] We made some minor preparation of the input (like reordering the input equations, or disabling the probabilistic representation of affine cells) to improve the timing.
      \item[\faClose] Bertini terminated the computation with an error.
      \item[\faWarning] The result given by Bertini is not consistent with our result in terms of degree/dimension.
    \end{description}
\end{table*}
 

\clearpage
\printbibliography
\end{document}